\documentclass[iop]{emulateapj}
\usepackage{amsmath}
\usepackage{longtable}
\usepackage{threeparttablex} 
\newcommand{\copyrightnotice}{\enlargethispage{24pt}
\let\thefootnote\relax\footnote{\copyright\ 2015 Christopher Heil}}

\newcommand{\EQ}{\; = \;}

\newcommand{\svee}{{\hbox{\raise.4ex \hbox{${\scriptscriptstyle{\vee}}$}}}}
\newcommand{\swedge}{{\hbox{\raise.4ex \hbox{${\scriptscriptstyle{\wedge}}$}}}}

\newcommand{\Ac}{{\mathcal{A}}}

\newcommand{\Bc}{{\mathcal{B}}}

\newcommand{\Cc}{{\mathcal{C}}}

\newcommand{\CHI}{\hbox{\raise.4ex \hbox{$\chi$}}}

\newcommand{\scap}{\hbox{\raise.25ex
\hbox{${\operatornamewithlimits{\scriptstyle\bigcap}}$}}}
\newcommand{\scup}{\hbox{\raise.25ex
\hbox{${\operatornamewithlimits{\scriptstyle\bigcup}}$}}}

\newcommand{\deltacheck}
  {\overset{\lower.4ex \hbox{${\scriptscriptstyle{\hskip 2 pt\vee}}$}} \delta}

\newcommand{\Ec}{{\mathcal{E}}}

\newcommand{\Fcheck}
    {\overset{\lower.4ex \hbox{${\scriptscriptstyle{\hskip 2 pt\vee}}$}} F}

\newcommand{\fwedgehat}
    {\overset{\lower.6ex \hbox{${\scriptscriptstyle{\hskip 3 pt\wedge}}$}} f}
\newcommand{\fveecheck}
    {\overset{\lower.4ex \hbox{${\scriptscriptstyle{\hskip 2 pt\vee}}$}} f}
\newcommand{\fkcheck}
   {\overset{\lower.4ex \hbox{${\scriptscriptstyle{\hskip 1 pt\vee}}$}} {f_k}}

\newcommand{\raiseprime}{\hbox{\raise.3ex \hbox{${\scriptstyle{\prime}}$}}}

\newcommand{\Gc}{{\mathcal{G}}}
\newcommand{\Gcheck}
    {\overset{\lower.4ex \hbox{${\scriptscriptstyle{\hskip 2 pt\vee}}$}} G}

\newcommand{\gveecheck}
    {\overset{\lower.4ex \hbox{${\scriptscriptstyle{\hskip 2 pt\vee}}$}} g}

\newcommand{\hcheck}
    {\overset{\lower.4ex \hbox{${\scriptscriptstyle{\hskip 2 pt\vee}}$}} h}

\newcommand{\Kcheck}
  {\overset{\lower.4ex \hbox{${\scriptscriptstyle{\hskip 2 pt\vee}}$}} K}

\newcommand{\Lc}{{\mathcal{L}}}

\newcommand{\mucheck}{\overset{\lower.4ex \hbox{${\scriptscriptstyle{\hskip 2 pt\vee}}$}} \mu}

\newcommand{\norm}[1]{\left\|#1\right\|}

\newcommand{\Oc}{{\mathcal{O}}}

\newcommand{\varphicheck}
    {\overset{\lower.4ex \hbox{${\scriptscriptstyle{\hskip 1 pt\vee}}$}}
              \varphi}

\newcommand{\R}{\mathbb{R}}
\newcommand{\Rc}{{\mathcal{R}}}

\newcommand{\Rbar}
  {{\overset{\hskip -0.9 pt \lower\ 1.5pt \hbox{{\rule{6.7pt}{0.45pt}}}} \R}}
\newcommand{\subRbar}
   {{\overset{\hskip -0.8 pt \lower\ 1.5pt \hbox{{\rule{4.5pt}{0.5pt}}}} \R}}

\newcommand{\Sc}{{\mathcal{S}}}

\newcommand{\SNa}{S_N^{\hskip 0.5 pt
  \hbox{\raise.3ex \hbox{\small\textup{a}}}}}
\newcommand{\SNaa}[1]{S_#1^{\hskip 0.5 pt
  \hbox{\raise.3ex \hbox{\small\textup{a}}}}}
\newcommand{\SNo}{S_N^{\hskip 0.5 pt
  \hbox{\raise.3ex \hbox{\small\textup{o}}}}}
\newcommand{\SNt}{S_N^{\hskip 0.5 pt
  \hbox{\raise.3ex \hbox{\small\textup{t}}}}}

\newcommand{\Tc}{{\mathcal{T}}}

\newcommand{\zeroveecheck}
    {\overset{\lower.4ex \hbox{${\scriptscriptstyle{\hskip 0.5 pt\vee}}$}} 0}

\hyphenation{Bun-ya-kow-ski}
\hyphenation{spatial}

\usepackage{amsmath,amsthm,amsfonts,amssymb,bm}
\usepackage[dvipsnames]{xcolor}
\usepackage{nicefrac} 
\usepackage{hyperref}
\usepackage{cleveref}
\usepackage{tikz}
\usetikzlibrary{shapes.gates.logic.US,trees,positioning,arrows}

\newtheorem{theorem}{Theorem}

\crefname{appendix}{Appendix}{Appendices}
\crefname{table}{Table}{Tables}
\crefname{equation}{Eq.}{Eqs.}
\crefname{figure}{Fig.}{Figs.}
\crefname{section}{Sec.}{Secs.}

\newcommand{\qij}{\bm{q}_i-\bm{q}_j}
\newcommand{\disqij}{\norm{\bm{q}_i-\bm{q}_j}}
\newcommand{\Mc}{\mathcal{M}}
\newcommand{\Euler}{\text{Euler}}
\newcommand{\Verlet}{\text{Verlet}}
\newcommand{\TriJump}{\text{TriJump}}
\newcommand{\fast}{\text{fast}}
\newcommand{\slow}{\text{slow}}
\newcommand{\multi}{\text{multi}}
\newcommand{\Kc}{\mathcal{K}}
\newcommand{\GRIT}{\texttt{GRIT}}

\newcommand{\smercuryt}{\texttt{SMERCURY-T}}
\newcommand{\epsK}{\varepsilon_{_K}}

\shorttitle{N-Rigid-Body Integrator}
\shortauthors{Chen et al.}

\begin{document}

\title{\texttt{GRIT}: a package for structure-preserving simulations of gravitationally interacting rigid-bodies}
\author{Renyi Chen \altaffilmark{1}, Gongjie Li \altaffilmark{2}, Molei Tao \altaffilmark{1}}
\affil{$^1$ School of Mathematics, Georgia Institute of Technology, Atlanta, GA 30332, USA}
\affil{$^2$ Center for Relativistic Astrophysics, School of Physics, Georgia Institute of Technology, Atlanta, GA 30332, USA}
\email{gongjie.li@physics.gatech.edu}

\begin{abstract}
Spin-orbit coupling of planetary systems plays an important role in the dynamics and habitability of planets. However, symplectic integrators that can accurately simulate not only how orbit affects spin but also how spin affects orbit have not been constructed for general systems. Thus, we develop symplectic Lie-group integrators to simulate systems consisting gravitationally interacting rigid bodies. A user friendly package (\texttt{GRIT}\footnote{\url{https://github.com/GRIT-RBSim/GRIT}}) is provided and external forcings such as tidal interactions are also included. As a demonstration, this package is applied to Trappist-I. It shows that the differences in transit timing variations due to spin-orbit coupling could reach a few min in ten year measurements, and strong planetary perturbations can push Trappist-I f, g and h out of the synchronized states. 
\bigskip
\end{abstract}



\section{Introduction}
Among thousands of detected exoplanetary systems, a significant fraction of them involve planets with close-in orbits. In particular, the occurrence rate for the compact systems (e.g., multiple planets with periods of less than 10 days) are estimated to be $\sim 20-30\%$ \citep{Muirhead15, Zhu18}. The close separations between the planets allow strong planetary interactions that could lead to rich features in the dynamical evolution of the compact planetary systems.

In particular, spin-axis dynamics becomes very interesting in compact planetary systems. For instance, rotational and tidal distortion of the planets can lead orbital precession due to planet spin-orbit coupling, and this causes variations in transit timing. Recently, \citep{Bolmont20} showed that the transit timing variations due to spin-orbit coupling could be detectable for Trappist-I, which could in turn help one constrain physical properties of the planets. In addition, although tidal effects are strong for planets with close-in orbits, strong interactions between the planets could push these planets (with orbital periods in $\sim 10$ days) out of synchronized states \citep{Vinson19}. Moreover, secular resonance-driven spin-orbit coupling could drive to large obliquity variations and lead to obliquity tides. This sculpts the exoplanetary systems: the obliquity tide could explain the overabundance of planet pairs that reside just wide of the first order mean-motion resonances \citep{Millholland19}.

Integrator involving spin-axis coupling have been developed to study these effects. There are mainly two different approaches: 1) evolving the orbital dynamics separately from the spin-axis evolutions \citep[e.g.,][]{Laskar93, li2014spin, Vinson19}; 2) evolving the spin and orbit evolution simultaneously \citep[e.g,][]{Hut81, eggleton1998equilibrium, Mardling02, Lissauer12, Bolmont15, Blanco-Cuaresma17, Millholland19}. In the first approach, orbital evolution of the systems are first integrated using N-body simulation packages assuming the objects are point-mass particles, and then spin-axis dynamics are computed using the results of the orbital evolution. This approach assumes that the effects of the spin on orbital dynamics are weak. In the second approach, additional force due to spin-orbit coupling is included in the N-body simulation package, which could affect the orbital evolution as well as the spin-axis evolution. 

To carefully study effects of spin-orbit coupling, we develop symplectic algorithm (``Gravitationally interacting Rigid-body InTegrator'', \texttt{GRIT}) starting with the first-principal rigid body dynamics, so that the mutual interactions between spin and orbital dynamics can be accurately accounted for. Symplectic Lie-Poisson integrator for rigid body has already been constructed in the seminal work of \citet{Touma94} for systems with near Keplerian orbits, focusing on a system with 1 rigid body (for systems with more than 1 rigid bodies, the spin dynamics of each rigid body is considered separately under it's own frame).
However, for systems that involve close-encounters, the orbits of object are no longer Keplerian.
The original version of the method in \cite{Touma94} was not high-order (in the time step) either. 
Building upon the existing progress, we no longer assume near Keplerian orbits for wider applicability, and our package includes several high-order implementations.
Moreover, we put all the bodies under the same inertia frame such that the spin orbit interactions are all considered altogether in one Hamiltonian framework.
We also note that symplectic integrator for secular spin-orbit dynamics have been developed by \citet{Breiter05}, while our method is based on direct (non-secular) numerical simulations and therefore suitable for resonant situations.



The development of our integrator is tightly based on the profound field of geometric integration. This is because rigid body dynamics can be intrinsically characterized by mechanical systems on Lie groups. More precisely, the phase space is $T^* \mathsf{SE}(3)^{\bigotimes n}$, where $n$ is the number of interacting bodies and the special Euclidean group $\mathsf{SE}(3)$ is where the center of mass and rotational orientation of each body lives. How to properly simulate such systems in a structure preserving way, so that symplecticity can be conserved and the dynamics remain on the Lie group, has been extensively studied. See e.g., \cite{iserles2000lie, bou2009hamilton, celledoni2014introduction} for general Lie group integrators, and more broadly, \cite{Hairer06,MR2132573,blanes2017concise,calvo1994numerical} for monographs on geometric integration. 

Regarding rigid body integrators in particular, the following is an incomplete list in addition to \cite{Touma94}. Firstly, the work of \cite{dullweber1997symplectic} used a splitting approach (similar to \cite{Touma94} in essence, however split differently) to construct symplectic and Lie-group-preserving integrators for rigid molecules. The main idea is to split the Hamiltonian into a free rigid body part, including both translational and rotational kinetic energies, plus a potential part. The latter can be exactly integrated, and the former too when the rigid body is axial symmetric; otherwise, it is further split into a symmetric top and a correction term, both of which can be exactly integrated in a cheap way (without using special functions). Methods in the proposed package (which are explicit, high-order integrators) are largely based on this idea. Secondly, we note various splitting schemes for integrating free rigid bodies were compared in \cite{fasso2003comparison}. Recall that the free rigid body is integrable, and its numerical simulation based on multiple ways of expressing the exact solution were also proposed (e.g., \cite{van2007symplectic, celledoni2008exact}), but the exact expressions involve special functions (unless the bodies are axial symmetric), which can be computationally expensive. Moreover, the `exact' solutions are not exact due to round-off errors, and this complication is studied (and remedied) in \cite{vilmart2008reducing}. For simple and robust arithmetic, the free-rigid-body part of our method will be based on a sub-splitting into an axial-symmetric part and a small correction, as most rotating celestial bodies relevant to this study are (almost) axial-symmetric. Also worth mentioning is, geometric integrators for (non-free but) gravitationally interacting rigid bodies have also been proposed; besides \cite{Touma94}, \cite{Lee07} constructed variational integrators using elegant geometric treatments; however, those integrators are implicit, and computational efficiency is hence not optimal.

As we are interested in gravitationally interacting rigid bodies, \GRIT{} uses tailored splitting schemes. This way, the existence of small parameters and separation of timescales in the system is utilized so that a better trade-off between efficiency and accuracy can be achieved (see Sec.\ref{sec:ourSplitting} for details). Our treatment is of course based on extensive existing studies of splitting methods, and some more general discussions on integrators based on splitting and composition can be found, e.g., in \cite{mclachlan2002splitting,blanes2008splitting,blanes2013new,tao2010nonintrusive}.



This article is organized as the following: section \textsection \ref{sec:rigid_body_representation} describes the rigid body formulation adopted in our article, and section \textsection  \ref{sec:algorithm} presents our symplectic algorithms. We then show consistency between our simulation using \GRIT{} and secular theories in section \textsection \ref{sec:codecomp}, for the case of a moonless Earth and the case of a hypothetical Earth-Moon system that include tidal interactions. In the end, we apply our package to simulate Trappist-I in section \textsection \ref{sec:Trap}, in order to investigate the effects of spin-orbit coupling in transit-timing variations, as well as in the tidally synchronized states of Trappist-I planets.


\section{Rigid Body Representation}\label{sec:rigid_body_representation}

As we are interested in the dynamics of the planet's spin-axis, the planet is modeled as a rigid body to account for its finite size and rotation.
Other than the spatial position and the linear momentum,
the rotational orientation and angular momentum of the rigid body are necessary to represent its state.
The above are $12$-dimensional in total, and besides the spatial position ($3$-dim) and its conjugate linear momentum ($3$-dim),
we still need a set of variables ($6$-dim) to represent the orientation and the rotation of the rigid body.

\subsection{The Body Frame and The Rotation Matrix}\label{subsec:rigid_body_representation_1}

Under a specific fixed reference frame of Euclidean space $\R^3$ with basis $\left( \bm{E}_1, \bm{E}_2, \bm{E}_3 \right)$,
the spatial position and the translational speed of a body can be expressed by vectors in $\R^3$.
On the other hand, the body frame (\cref{fig:body_frame}) attached to the body gives fixed coordinates of each small particle of the body.
As can be seen from \cref{fig:body_frame}, orthogonal bases $\left( \bm e_1, \bm e_2, \bm e_3 \right)$ form a body frame of this rigid body.
As the body moving along the dashed trajectory following arrows as well as self rotating from time $t_0$ to time $t_1$,
coordinates of points $P, Q$ under the body frame stay the same, without subjecting to the motion of the rigid body.

\begin{figure}
\centering
\includegraphics[width=0.8\linewidth]{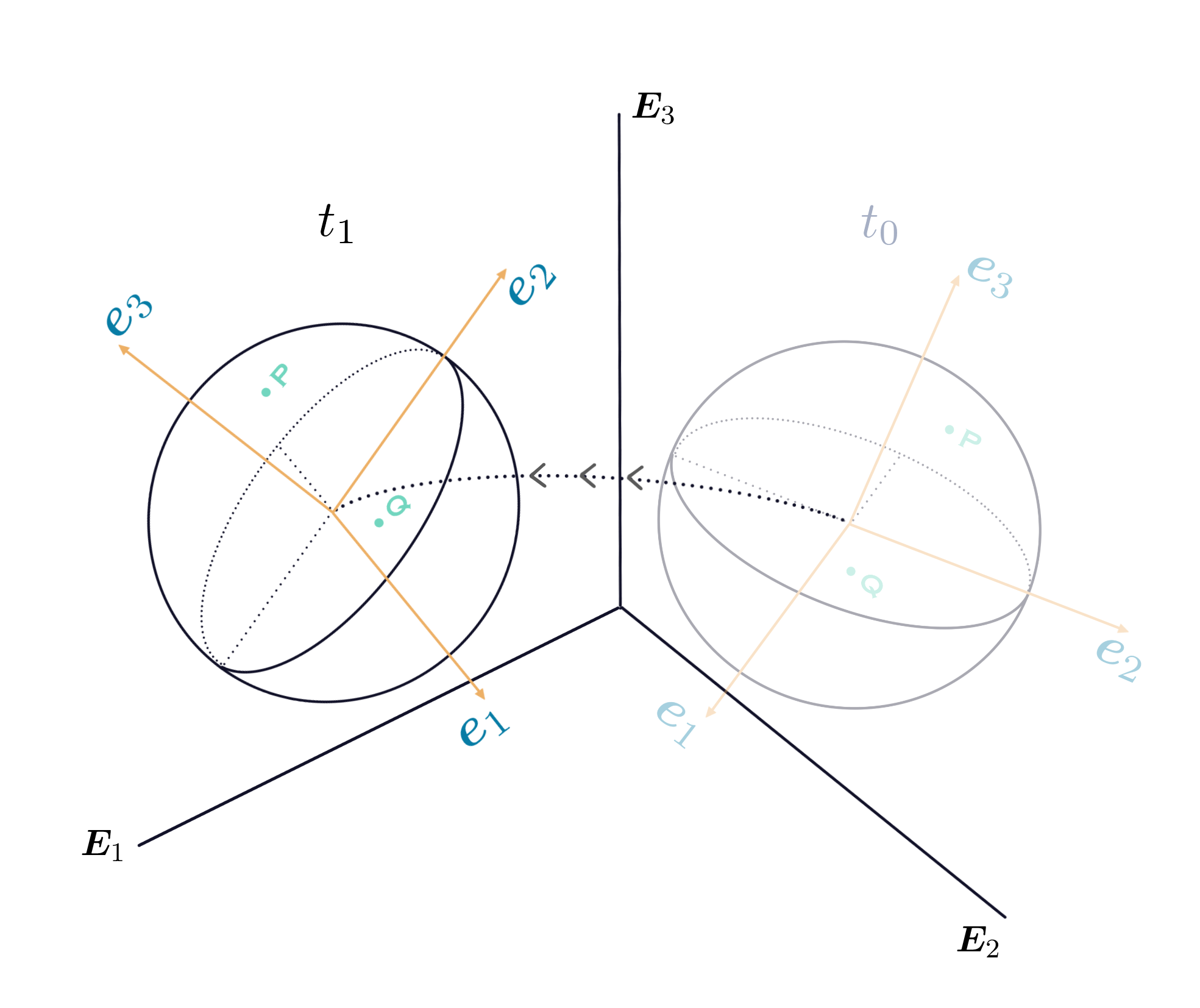}
\caption{The body frame.}
\label{fig:body_frame}
\end{figure}

The configuration of a rigid body is described by both the position of its center of mass and its rotational orientation. The orientation in the reference frame can be expressed as an rotation by an orthogonal matrix $\bm{R}(t) \in \mathsf{SO}(3)$ from the body frame (e.g., z-axis of the body frame at time $t$ will be $\bm{R}(t) \cdot \begin{bmatrix} 0 & 0 & 1 \\ \end{bmatrix}^T$ in the reference frame). To switch between the inertia frame and the body frame, one can simply left multiply the rotation matrix $\bm R$ or $\bm R^{-1}$. Note that $\bm R \in \mathsf{SO}(3)$ and if a numerical method can keep $\bm R$ exactly in this Lie group, its inverse will be equal to its transpose, i.e.\ $\bm R^{-1}=\bm R^T$.

\subsection{The Angular Velocity and the Angular Momentum}
Denoting $\bm{\Omega} = \begin{bmatrix} \Omega_1 & \Omega_2 & \Omega_3 \end{bmatrix}^T \in \R^3$
the angular velocity of the rigid body under the body frame,
then the direction of $\bm\Omega$ matches the rotational axis and $\norm{\bm{\Omega}}_2$ represents the rotational speed.
Consider a mass point $\bm{x} = \begin{bmatrix} x_1 & x_2 & x_3 \end{bmatrix}^T$ in one rigid body under the body frame,
its speed under the body frame can be expressed as
\begin{align}
    \bm{\Omega} \times \bm{x} =
    \begin{bmatrix}
        0 & -\Omega_3 & \Omega_2 \\
        \Omega_3 & 0 & -\Omega_1 \\
        -\Omega_2 & \Omega_1 & 0 \\
    \end{bmatrix}
    \begin{bmatrix}
        x_1 \\
        x_2 \\
        x_3 \\
    \end{bmatrix}
    = \hat{\bm{\Omega}} \bm{x},
\end{align}
where the hat-map $\hat{\cdot}$ is an isomorphism from the Lie algebra $\mathfrak{so}(3)$ to 3-by-3 skew-symmetric matrices, defined by
\begin{align}
\hat{\bm{\Omega}} := 
\begin{bmatrix}
    0 & -\Omega_3 & \Omega_2 \\
    \Omega_3 & 0 & -\Omega_1 \\
    -\Omega_2 & \Omega_1 & 0 \\
\end{bmatrix}.
\end{align}

In addition, the inverse map of $\hat{\cdot}$ is denoted by $\check{\cdot}$.

With the angular velocity, the rotational kinetic energy of this rigid body can be expressed as
\begin{align}
\begin{split}
    T^{rot}\left( \bm{\Omega} \right)
    & = \frac{1}{2} \bm{\Omega}^T \bm{J} \bm{\Omega} \\
\end{split}
\label{eq:T_rot_J}
\end{align}
with $\bm{J}$ the (standard) moment of inertia tensor. 
Specifically, for an ellipsoid with semiaxes $a,b,c$ and mass $M$, choosing the principal axes as the body frame such that $x,y,z$-axes matches semi-axes
and taking the integral, we have the following moment of inertia tensor for a uniform density object.
\begin{align}
\begin{split}
\bm{J} 
    & =\int_\Bc \rho(\bm{x}) \hat{\bm{x}}^T \hat{\bm{x}} \, d^3\bm{x} \\
    & = \begin{bmatrix}
    \frac{1}{5} M (b^2 + c^2) & 0                         & 0                         \\
    0                         & \frac{1}{5} M (a^2 + c^2) & 0                         \\
    0                         & 0                         & \frac{1}{5} M (a^2 + b^2) \\
        \end{bmatrix}. \\
\end{split}
\label{eq:ellipsoid_moment_of_inertia_matrix}
\end{align}
Note that one may substitute this with the principal moment of inertia directly.
 
Alternatively, the rotational kinetic energy can also be expressed as
\begin{align}
    T^{rot} \left( \hat{\bm{\Omega}} \right)
 & = \frac{1}{2} Tr \left[ \hat{\bm{\Omega}} \bm{J}_d \hat{\bm{\Omega}}^T \right].
\label{eq:T_rot_Jd}
\end{align}

with $\bm{J}^{(d)} = \int_\mathcal{B} \rho(\bm{x}) \bm{x} \bm{x}^T \, d\bm{x}$ (nonstandard) moment of inertia.
We also have $\bm{J}^{(d)} = \frac{1}{2} \, Tr \left[ \bm{J} \right] \bm{I}_{3 \times 3} - \bm{J}$
($\bm{J} = Tr \left[ \bm{J}^{(d)} \right] \bm{I}_{3 \times 3} - \bm{J}^{(d)}$).

By definition, the angular momentum in the body frame is $\bm \Pi=\bm J \bm \Omega$.
Left multiplying the rotation matrix $\bm{R}$, the angular velocity and the angular momentum in the inertia frame are $\bm{\omega}=\bm{R} \bm{\Omega}$ and $\bm \pi = \bm R \bm \Pi$ respectively.

\subsection{The Relation between the Rotation Matrix and the Angular Velocity}
Express $\bm{R}(t)=\begin{bmatrix} \bm{c}_1(t) | \bm{c}_2(t) | \bm{c}_3(t) \end{bmatrix}$ where 
$\bm{c}_1(t),\, \bm{c}_2(t),\, \bm{c}_3(t)$ are columns of $\bm{R}(t)$.
We have $\bm{c}_1(t),\, \bm{c}_2(t),\, \bm{c}_3(t)$
representing directions of three axes of the body in the reference frame respectively.
By the definition of angular velocity, we have $\dot{\bm{c}}_i(t) = \hat{\bm{\omega}} c_i(t)$ for $i=1,2,3$, thus
\begin{equation} \label{eq:rot_ang_vel}
    \dot{\bm{R}}(t) = \hat{\bm{\omega}} \bm{R}(t).
\end{equation}
Multiplying both sides of \cref{eq:rot_ang_vel} with ${\bm{R}(t)}^T$,
we have $\dot{\bm{R}}(t) {\bm{R}(t)}^T = \hat{\bm{\omega}}$ which is a skew-symmetric matrix.  
Considering the speed of an arbitrary mass point $\bm{x}$,
$\bm v_{\bm{x}} = \bm{R} \hat{\bm{\Omega}} \bm{x} = \hat{\bm{\omega}} \left[ \bm{R} \bm{x} \right]$.
Thus $\bm{R} \hat{\bm{\Omega}} = \hat{\bm{\omega}} \bm{R}$, which
implies $\hat{\bm{\Omega}} = \bm{R}^T \hat{\bm{\omega}} \bm{R} = \bm{R}^T \dot{\bm{R}}$.



To summarize, for a rigid body, it's angular velocity and angular momentum in different frames are denoted as the following,

\begin{table}[h!]
\centering
 \begin{tabular}{|c c c|} 
 \hline
                    & The inertia frame    & The body frame \\ 
                     & (fixed) & (moving) \\
 \hline
 Angular & $\bm{\omega} \, (=\! \bm{R} \bm{\Omega})$ & $\bm{\Omega}$            \\ 
   velocity & & \\
   \hline
    Angular & $\bm{\pi} \, (=\! \bm{R} \bm{\Pi})$       & $\bm{\Pi}$               \\
      momentum & & \\
 \hline
 \end{tabular}
\end{table}


\noindent
with $\bm{\Pi} = \bm{J} \bm{\Omega}$ and $\bm{\pi} = \bm{J} \bm{\omega}$.
Specifically, we have $\hat{\bm{\Omega}} = \bm{R}^T \dot{\bm{R}}$ and $\hat{\bm{\omega}} = \dot{\bm{R}} \bm{R}^T$.

The rotation matrix $\bm{R}$ and the angular momentum $\bm{\Pi}$ will be utilized to describe a rigid body
when we design an N-rigid-body integrator later (details can be found in Sec.\ref{sec:algorithm}).

\section{Rigid Body Simulation: Algorithms}\label{sec:algorithm}

In this section, we will design symplectic integrators of the N-rigid-body system using splitting methods.
The splitting method is basically to view the Hamiltonian (\cref{eq:algorithm_Hamiltoion}) as the sum of  several integrable parts,
and then to compose the flow of each part over some pre-designed time duration to achieve a certain order of local error.
In the following, we will introduce the Hamiltonian, build the symplectic integrators and analyze the accuracy of integrators step by step.
In addition, we will provide a way to incorporate non-conservative forces into the integrators, such as the tidal force and post Newtonian effects.

\subsection{The Constrained Hamiltonian of an N-rigid-body System}

Denote $m_i$ the mass of the $i$-th body;
$\bm{q}_i \in \R^3$ the position of the $i$-th body;
$\bm{p}_i \in \R^3$ the linear momentum of the $i$-th body;
$\bm{R}_i \in \mathsf{SO}(3)$ the rotation matrix of the $i$-th body;
$\bm{\Pi}_i \in \R^3$ the angular momentum of the $i$-th body;
$\bm{J}_i \in \R^{3\times 3}$ the (standard) moment of inertia tensor for the $i$-th body.


The Hamiltonian of this system consists of the linear kinetic energy $T^{linear} = \sum_i \frac{1}{2} \bm{p}_i^T \bm{p}_i / m_i$,
the rotational kinetic energy $T^{rot}=\sum_i \frac{1}{2} \bm{\Pi}_i^T \bm J_i^{-1} \bm{\Pi}_i$ and
the potential energy 
\begin{align}
V(\mathsf{q}, \mathsf{R}) = \sum_{i < j} V_{ij} \left( \bm{q}_i, \bm{q}_j, \bm{R}_i, \bm{R}_j \right).
\label{eq:potential_energy}
\end{align}
Denote $\mathsf{q} = \left\{ \bm{q}_1, \bm{q}_2, \ldots, \bm{q}_N \right\}$,
 $\mathsf{p} = \left\{ \bm{p}_1, \bm{p}_2, \ldots, \bm{p}_N \right\}$,
 $\mathsf{\Pi} = \left\{ \bm{\Pi}_1, \bm{\Pi}_2, \ldots, \bm{\Pi}_N \right\}$,
 $\mathsf{R} = \left\{ \bm{R}_1, \bm{R}_2, \ldots, \bm{R}_N \right\}$.
The Hamiltonian can be expressed as 
\begin{align}
\begin{split}
    & H(\mathsf{q},\mathsf{p},\mathsf{\Pi},\mathsf{R})
    = \sum_i \frac{1}{2} \bm{p}_i^T \bm{p}_i / m_i \\
    & \quad + \sum_i \frac{1}{2} \bm{\Pi}_i^T \bm I_i^{-1} \bm{\Pi}_i
    +  V(\mathsf q, \mathsf R)
    \qquad \text{with } \bm{R}_i \in \mathsf{SO}(3).
\end{split}
\label{eq:algorithm_Hamiltoion}
\end{align}
The true potential energy between $i$-th body and $j$-th body is
\begin{align}
    \int_{\mathcal{B}_i} \int_{\mathcal{B}_j}
    - \frac{\mathcal{G} \rho(\bm{x}_i)\rho(\bm{x}_j)}
    {\| \left( \bm{q}_i + \bm{R}_i \bm{x}_i \right) - \left( \bm{q}_j + \bm{R}_j \bm{x}_j \right) \|}
    \, d\bm{x}_j \, d\bm{x}_i.
    \label{eq:true_potential}
\end{align}

We may approximate it as $V_{ij}$ (in \cref{eq:potential_energy}) by Taylor expanding the denominator.
Expanding to the 2nd order with respect to the radius of the planet over the distance between two bodies (see appendix \ref{appendix:potential}), the approximated potential is,
\begin{align}
    \begin{split}
        & \int_{\Bc_i} \int_{\Bc_j}
                - \frac{\Gc \rho(\bm x_i)\rho(\bm x_j)}
                {\| \left( \bm{q}_i + \bm{R}_i \bm{x}_i \right) - \left( \bm{q}_j + \bm{R}_j \bm{x}_j \right) \|}
                \, d\bm{x}_j \, d\bm{x}_i \\
        & \approx -\frac{\Gc m_i m_j}{\disqij} - \frac{\Gc\left( m_i Tr[\bm J_i] + m_j Tr[\bm J_j] \right)}{2 \disqij^3} \\
        & + \frac{3 \Gc {\left( \qij \right)}^T \left( m_j \bm R_i \bm J_i \bm R_i^T + m_i\bm R_j \bm J_j \bm R_j^T \right)\left( \qij \right)}{2 \disqij^5} \\
    \end{split}
    \label{eq:approximated_potential}
\end{align}
with $-\frac{\Gc m_i m_j}{\disqij}$ being the potential of purely point mass interactions and
the rest part being the corrections of the potential due to the body $i$ and $j$ being not point masses.
If we further expand the potential to the 4th order (see appendix \ref{appendix:potential}), rigid body -- rigid body interactions will also be included as higher order corrections. For example, fourth order potential has recently been considered for binary asteroids with large non-spherical terms, and leads to interesting effects \citep{Hou17}.

\subsection{Equations of Motion}



The Lagrangian for a system consisting of one rigid body is a function of $\bm{R}(t)$ and $\dot{\bm{R}}(t)$ by plugging in $\hat{\bm \Omega}=\bm R^T \dot{\bm R}$ in \cref{eq:T_rot_Jd},
\begin{align}
\begin{split}
 & L\left( \bm{R}, \dot{\bm{R}} \right) 
 = \frac{1}{2} Tr\left[ \dot{\bm{R}} \bm{J}_d \dot{\bm{R}}^T  \right] - V(\bm{R}).
\end{split}
    \label{eq:Lagrangian}
\end{align} 

Utilizing the constraint $\bm R^T \bm R - \bm I = \bm 0$ (appendix \ref{appendix:approach1}) or using the variational principle of Hamilton's for Lie group  (appendix \ref{appendix:approach2}), one can derive the equations of motion
\begin{align}
\left\{ 
    \begin{aligned}
        \dot{\bm{R}} & = \bm R \widehat{\bm{J}^{-1} \bm{\Pi}}, \\
        \dot{\bm{\Pi}} & = \bm{\Pi} \times \bm{J}^{-1} \bm{\Pi} 
        - {\left( \bm{R}^T \frac{\partial V\left( \bm{R} \right)}{\partial \bm{R}} 
        - {\left(\frac{\partial V\left( \bm{R} \right)}{\partial \bm{R}}\right)}^T \bm{R} \right)}^{\vee}. \\
    \end{aligned}
\right.
\label{eq:eom_one_rb}
\end{align}


Similarly, the equations of motion of the
N-rigid-body system for the Hamiltonian (\cref{eq:algorithm_Hamiltoion}) are,

\begin{align}
\left\{ 
    \begin{aligned}
        \dot{\bm{q}}_i   & = \frac{\bm{p}_i}{m_i},                       \\
        \dot{\bm{p}}_i   & = -\frac{\partial{V}}{\partial{\bm{q}_i}},    \\
        \dot{\bm{R}}_i   & = \bm R_i \widehat{\bm{J}_i^{-1} \bm{\Pi}_i}, \\
        \dot{\bm{\Pi}}_i & = \bm{\Pi}_i \times \bm{J}_i^{-1} \bm{\Pi}_i
                            - {\left( \bm{R}_i^T \frac{\partial V}{\partial \bm{R}_i} 
                            - {\left(\frac{\partial V}{\partial \bm{R}_i}\right)}^T \bm{R}_i \right)}^\vee. \\
    \end{aligned}
\right.
\label{eq:eom_N_rb}
\end{align}


\subsection{Splitting Methods for the System with Axis-symmetric Bodies}
In this section, we utilize the splitting method to construct symplectic integrators.
A diverse range of symplectic integrators with different accuracy and time complexities can be designed
as the splitting method is quite flexible in terms of splitting and composition.
Based on our Hamiltonian of the N-rigid-body system, we will explore three different types of integrators.
One split the Hamiltonian into two parts with comparable size, the other two split the Hamiltonian into, respectively, three and four parts corresponding to various magnitudes and hence different timescales.

In terms of the shape of rigid bodies, we make the axis-symmetric assumption in this section for simplicity. That is, without loss of generality, $\bm J_i = \begin{bmatrix} J_i^{(1)} & 0 & 0 \\ 0 & J_i^{(1)} & 0 \\ 0 & 0 & J_i^{(3)} \end{bmatrix}$.
For general rigid bodies that are not axis-symmetric, different splitting mechanisms can be applied (see Sec.\ref{sec:nonsymmetric}).

\subsubsection{
Classical Splitting for Rigid-Body: \texorpdfstring{$H=H_1+H_2$}{} with \texorpdfstring{$\frac{H_1}{H_2} = \mathcal{O}(1)$}{}} \label{sec:split1}


One way of splitting is $H=H_1+H_2$ following \cite{dullweber1997symplectic}, with
\begin{align}
    \left\{
        \begin{aligned}
            H_1(\mathsf{q},\mathsf{p},\mathsf{\Pi},\mathsf{R}) & = \sum_i \frac{1}{2} \bm{p}_i^T \bm{p}_i / m_i
                            + \sum_i \frac{1}{2} \bm{\Pi}_i^T \bm J_i^{-1} \bm{\Pi}_i, \\
            H_2(\mathsf{q},\mathsf{p},\mathsf{\Pi},\mathsf{R}) & = V(\mathsf{q}, \mathsf{R})
                            = \sum_{i \neq j} V_{ij} \left( \bm{q}_i, \bm{q}_j, \bm{R}_i, \bm{R}_j \right). \\
        \end{aligned}
    \right.
\label{eq:split1}
\end{align}


For $H_1$, the equations of motion are
\begin{align}\label{eq:EoM_H1}
    \left\{ 
        \begin{aligned}
            \dot{\bm{q}}_i   & = \frac{\bm{p}_i}{m_i},                       \\
            \dot{\bm{p}}_i   & = 0,    \\
            \dot{\bm{R}}_i   & = \bm R_i \widehat{\bm{J}_i^{-1} \bm{\Pi}_i}, \\
            \dot{\bm{\Pi}}_i & = \bm{\Pi}_i \times \bm{J}_i^{-1} \bm{\Pi}_i. \\
        \end{aligned}
    \right.
\end{align}
In \cref{eq:EoM_H1}, the $4$th equation is the Euler equation for a free rigid body. It is exactly solvable, and the solution expression is particularly simple for axial-symmetric bodies:
\begin{align}
    \bm \Pi_i(t) 
    = \exp\left( {-\theta t \widehat{\begin{bmatrix}
                0 \\ 0 \\ 1 \\
        \end{bmatrix}}} \right) \bm \Pi_i(0)
    := \mathrm{R}_z^T(\theta t) \bm \Pi_i(0)
\end{align}
with
$\theta=\left( \frac{1}{J_i^{(3)}} - \frac{1}{J_i^{(1)}} \right) \bm \Pi_i^T(0) \begin{bmatrix}
    0 \\ 0 \\ 1 \\
\end{bmatrix}$ and $\mathrm{R}_z$ being the rotation matrix.
Take $\bm \Pi_i(t)$ back to the $3$rd equation of \cref{eq:EoM_H1}, we can obtain $\bm R_i(t)$ too.

Therefore, the flow $\phi^{[1]}_t$ of $H_1$ is,
\begin{align}\label{eq:phi_H1}
    \left\{ 
        \begin{aligned}
            \bm{q}_i (t)   & = \bm{q}_i(0) + \frac{\bm{p}_i}{m_i} t,                       \\
            \bm{p}_i (t)   & = \bm{p}_i(0),    \\
            \bm{R}_i (t)   & = \bm R_i (0) \, \mathrm{R}_{\bm \Pi(0)}\left( \frac{\norm{\bm \Pi(0)}}{J_i^{(1)}} t \right) \,
                                   \mathrm{R}_z \left( \theta t \right), \\
            \bm{\Pi}_i (t)  & = \mathrm{R}_z^T \left( \theta t \right) \, \bm{\Pi}(0), \\
        \end{aligned}
    \right.
\end{align}
with $\mathrm{R}_z$ and $\mathrm{R}_{\bm \Pi(0)}$ being rotation matrices representing the rotations around the $z$-axis and $\bm \Pi(0)$ respectively.

For $H_2$, the equations of motion are
\begin{align}
    \left\{ 
        \begin{aligned}
            \dot{\bm{q}}_i   & = 0,                       \\
            \dot{\bm{p}}_i   & = -\frac{\partial{V}}{\partial{\bm{q}_i}},    \\
            \dot{\bm{R}}_i   & = 0, \\
            \dot{\bm{\Pi}}_i & = - {\left( \bm{R}_i^T \frac{\partial V}{\partial \bm{R}_i} 
            - {\left(\frac{\partial V}{\partial \bm{R}_i}\right)}^T \bm{R}_i \right)}^\vee. \\
        \end{aligned}
    \right.
\end{align}
As $\bm q_i$ and $\bm R_i$ stay constants, we have $\bm p_i$ and $\bm \Pi_i$ changing at constant rates. Therefore, the flow $\phi^{[2]}_t$ for $H_2$ is given by
\begin{align}
    \left\{ 
        \begin{aligned}
            \bm{q}_i(t)   & = \bm{q}_i(0),                       \\
            \bm{p}_i(t)   & = \bm{p}_i(0) -\frac{\partial{V}}{\partial{\bm{q}_i}} t,    \\
            \bm{R}_i(t)   & = \bm{R}_i(0), \\
            \bm{\Pi}_i(t) & = \bm{\Pi}_i(0) - {\left( \bm{R}_i^T \frac{\partial V}{\partial \bm{R}_i} 
            - {\left(\frac{\partial V}{\partial \bm{R}_i}\right)}^T \bm{R}_i \right)}^\vee t. \\
        \end{aligned}
    \right.
\end{align}

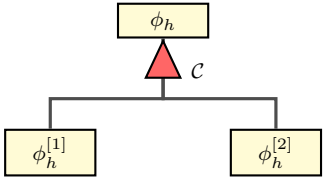
\begin{figure}
\centering
\begin{tikzpicture}[
    compose1/.style={buffer gate US,thick,draw,fill=blue!60,rotate=90,
		anchor=east,minimum width=0.3cm},
    compose2/.style={buffer gate US,thick,draw,fill=red!60,rotate=90,
		anchor=east,minimum width=0.3cm},
    label distance=3mm,
    every label/.style={blue},
    event/.style={rectangle,thick,draw,fill=yellow!20,text width=1cm,
		text centered,font=\sffamily,anchor=north},
    edge from parent/.style={very thick,draw=black!70},
    edge from parent path={(\tikzparentnode.south) -- ++(0,-0.8cm)
			-| (\tikzchildnode.north)},
    level 1/.style={sibling distance=3cm,level distance=1.2cm,
			growth parent anchor=south,nodes=event},
    level 2/.style={sibling distance=2.25cm},
    level 3/.style={sibling distance=6cm},
    level 4/.style={sibling distance=3cm}
    ]
    \node (g1) [event] {$\phi_h$}
	     child{node (g2) {$\phi_h^{[1]}$}}
     	 child {node (g3) {$\phi_h^{[2]}$}};
   \node [compose2]	at (g1.south) [label={[xshift=32, yshift=-2, color=black]$\Cc$}] {};
\end{tikzpicture}
\caption{Composition of $\phi_h^{[1]}$ and $\phi_h^{[2]}$. The root node represent the final scheme. The leaves represent the basic ingredients of the composition which are exact flows. The red arrow represent the {composition method} specialized in composing two child flows with comparable scales.}\label{fig:compose1}
\end{figure}
We may compose $\phi_t^{[1]}$ and $\phi_t^{[2]}$ via $\Cc$ to construct different symplectic integrators~\citep[see e.g., ][]{mclachlan2002splitting} (see \cref{fig:compose1}).
To name a few,  set $\Cc$ 
as $\Cc_{\Euler}$, $\phi_h = \phi_h^{[1]} \circ \phi_h^{[2]}$ is a 1st order scheme with $h$ being the step size (see appendix \ref{appendix:composition} for $\Cc_\Euler$ and the following composition methods $\Cc_\Verlet$ and $\Cc_{S6}$).

Applying symmetric composition $\Cc_\Verlet$, a $2$nd order integrator $\Tc_2$ is in the form of
\begin{align}
    \phi_h^{\Verlet}:=\Cc_{\Verlet} \left(\phi_{h}^{[1]}, \phi_{h}^{[2]} \right) = \phi_{\frac{h}{2}}^{[1]} \circ \phi_h^{[2]} \circ \phi_{\frac{h}{2}}^{[1]}
    \label{eq:phi_Verlet}
\end{align}

Applying $\phi_h^{\TriJump}$ \citep{suzuki1990fractal},
we have the following 4th-order scheme $\Tc_4$,
\begin{align}
    \phi_{\gamma_1 h}^\text{Verlet} \circ \phi_{\gamma_2 h}^\text{Verlet} \circ \phi_{\gamma_3 h}^\text{Verlet},
    \label{eq:4th_order_suzuki}
\end{align}
with $\gamma_1 = \gamma_3 = \frac{1}{2-2^{1/3}}$, $\gamma_2=1-2\gamma_1$.
Similarly, a 6th-order scheme $\Tc_6$ can be constructed by composing $\phi_h^{[1]}$, $\phi_h^{[2]}$ with $\Cc_{S6}$

In the package,  $\Tc_2$, $\Tc_4$ and $\Tc_6$ are implemented.


\subsubsection{
Tailored Splitting I: \texorpdfstring{$H=H_1+H_2+H_3+H_4$}{} with \texorpdfstring{$\frac{H_1}{H_2} = \mathcal{O}(1)$}{}
and \texorpdfstring{$\frac{H_3}{H_1},\frac{H_4}{H_2} = \mathcal{O}(\varepsilon)$}{}}
\label{sec:ourSplitting}

Different from point mass systems which can already exhibit dynamics over multiple timescales, the N-rigid-body system can have additional timescales created by the rotational dynamics. 

Thus, we further split the Hamiltonian into more terms of different magnitudes, which produce flows at different timescales, and then carefully compose them\footnote{
Similar techniques have already been employed; see e.g., \cite{blanes2013new} and references therein. The structure of our system, however, is new (due to the rigid-body part) and thus so is our specific splitting.}. More specifically, consider $H=H_1+H_2+H_3+H_4$ with
\begin{align}
    \left\{
        \begin{aligned}
            H_1(\mathsf{q},\mathsf{p}) & = \sum_i \frac{1}{2} \bm{p}_i^T \bm{p}_i / m_i \\
            H_2(\mathsf{q},\mathsf{p}) & = -\sum_{i<j} \frac{\mathcal{G} m_i m_j}{\norm{\bm q_i - \bm q_j}}, \\
            H_3(\mathsf{\Pi}) & = \sum_i \frac{1}{2} \bm{\Pi}_i^T \bm J_i^{-1} \bm{\Pi}_i, \\
            H_4(\mathsf{q},\mathsf{R}) & = V(\mathsf{q}, \mathsf{R}) - H_2. \\
        \end{aligned}
    \right.
\label{eq:split2}
\end{align}
Here, $H_1$, $H_2$ have comparable size and $\frac{H_3}{H_1},\frac{H_4}{H_2}=\Oc \left( \varepsilon \right)$
with $\varepsilon$ being a small scaling parameter determined by the properties of the system.
Based on scales of the dynamics, we denote $H^{\fast}=H_1+H_2$ and $H^{\slow}=H_3+H_4$.
For example, consider the solar system, setting all the bodies to be point masses except the Earth, 
$\varepsilon\approx10^{-6}$.

The flows $\left\{ \varphi^{[1]}_t,\varphi^{[2]}_t,\varphi^{[3]}_t,\varphi^{[4]}_t\right\}$
of $\left\{H_1,H_2,H_3,H_4\right\}$ can be derived similarly to Sec.\ref{sec:split1} and 
the schemes are build by hierarchically composing $\left\{ \varphi_t^{[i]} \right\}_{i=1}^4$ together.
Specifically, as shown in \cref{fig:hierarchical}, we firstly group the flows of the fast dynamics ( $\varphi_t^{[1]}$ and $\varphi_t^{[2]}$) as a sub-scheme $\varphi_h^{\fast}$ via $\Cc_{\fast}$ and the flows of the slow dynamics ($\varphi_t^{[3]}$ and $\varphi_t^{[4]}$) as a sub-scheme $\varphi_h^{\text{slow}}$ via $\Cc_{\slow}$ respectively. Then composing $\varphi_h^{\text{fast}}$ and $\varphi_h^{\text{slow}}$ together as the final scheme $\varphi_h^{\multi}$ via $\Cc_{\multi}$.
$\Cc_{\fast}$ and $\Cc_{\slow}$ are composition methods of composing two Hamiltonian flows with comparable scales \citep{mclachlan2002splitting}.
$\Cc_{\multi}$ is a composition method specialized in perturbative Hamiltonian systems of the form $H=A+\varepsilon B$ \citep{mclachlan1995composition, laskar2001high, blanes2013new}.
Note that the flows $\varphi_h^{\fast}$, $\varphi_h^{\slow}$ are not exact, so the order of $\varphi_h^{\multi}$ is not the same as the order of $\Cc_{\multi}$ applied for exact flows.
In fact, the global error of $\varphi_h^{\multi}$ is the summation of the global errors of all three methods $\Cc_{\fast}$, $\Cc_{\slow}$ and $\Cc_{\multi}$ (see appendix \ref{appendix:proof_error} for proof).

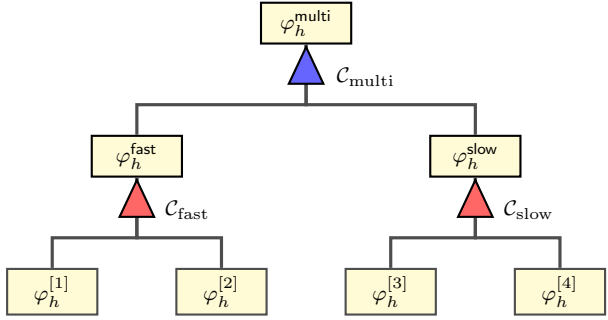
\begin{figure}
\centering
\begin{tikzpicture}[
    compose1/.style={buffer gate US,thick,draw,fill=blue!60,rotate=90,
		anchor=east,minimum width=0.3cm},
    compose2/.style={buffer gate US,thick,draw,fill=red!60,rotate=90,
		anchor=east,minimum width=0.3cm},
    label distance=3mm,
    every label/.style={blue},
    event/.style={rectangle,thick,draw,fill=yellow!20,text width=1cm,
		text centered,font=\sffamily,anchor=north},
    edge from parent/.style={very thick,draw=black!70},
    edge from parent path={(\tikzparentnode.south) -- ++(0,-0.8cm)
			-| (\tikzchildnode.north)},
    level 1/.style={sibling distance=4.5cm,level distance=1.2cm,
			growth parent anchor=south,nodes=event},
    level 2/.style={sibling distance=2.25cm},
    level 3/.style={sibling distance=6cm},
    level 4/.style={sibling distance=3cm}
    ]
    \node (g1) [event] {$\varphi_h^{\multi}$}
	     child{node (g2) {$\varphi_h^{\fast}$}   
	     	   child {node (g4) {$\varphi_h^{[1]}$}}
	     	   child {node (g5) {$\varphi_h^{[2]}$}}
	     	   }
     	 child {node (g3) {$\varphi_h^{\slow}$}
	     	   child {node (g6) {$\varphi_h^{[3]}$}}
	     	   child {node (g7) {$\varphi_h^{[4]}$}}
		};
   \node [compose1]	at (g1.south) [label={[xshift=50, yshift=-2, color=black]$\Cc_{\multi}$}] {};
   \node [compose2]	at (g2.south) [label={[xshift=44, yshift=-2, color=black]$\Cc_{\fast}$}] {};
   \node [compose2]	at (g3.south) [label={[xshift=46, yshift=-2, color=black]$\Cc_{\slow}$}] {};
\end{tikzpicture}
\caption{Hierarchical composition tree. The root node represents the final scheme. The leaves represent the basic ingredients of the composition which are exact flows. Nodes in the middle represent the intermediate composition flows. Red arrows represent the {composition methods} specialized in composing two child flows with similar scales. The blue arrow represent the {composition methods} specialized in composing two child flows with different scales.}\label{fig:hierarchical}
\end{figure}

For example, if we set $\Cc_{\fast}$, $\Cc_{\slow}$ and $\Cc_{\multi}$ as $\Cc_{S6}$, $\Cc_\Verlet$ and $\Cc_{ABA42}$
(see appendix \ref{appendix:composition}) respectively.
The global error of the above method is $\Oc(h^6) + \Oc(\varepsilon^2 h^2 ) + \Oc( \varepsilon h^4 + \varepsilon^2 h^2 )$, i.e.\ $\Oc\left( h^6 + \varepsilon h^4 + \varepsilon^2 h^2 \right)$. We name it $\Mc_{642}$ scheme with $6,4,2$ representing the power of $h$ of each term in the order and $\Mc$ representing multiscale splitting.

Similarly, we design the $\Mc_{42}$ scheme by choosing $\Cc_{\fast}$, $\Cc_{\slow}$ and $\Cc_{\multi}$ as $\Cc_\TriJump$, $\Cc_\Verlet$ and $\Cc_{ABA22}$ respectively with the global error being $\Oc(h^4 + \varepsilon h^2)$.

\begin{table}
\begin{center}
\begin{tabular}{| c | c | c |}
\hline
 scheme & expensive stages & order \\ 
 \hline
$\Tc_2$ & 3 & $(2)$ \\
$\Tc_4$ & 7 & $(4)$ \\
$\Tc_6$ & 15 & $(6)$ \\
$\Mc_{42}$ & 3 & $(4,2)$ \\
$\Mc_{642}$ & 6 & $(6,4,2)$ \\
\hline
\end{tabular}
\end{center}
\caption{Comparisons of different schemes with respect to the number of dominating expensive stages and the global error order. The number of expensive stages are counted in an isolated step without considering the concatenation of the last stage with the first stage of the next step. The notation in the `order' column is explained in the main text.}
\label{tb:cmp_schemes}
\end{table}

Compared with schemes in Sec.\ref{sec:split1}, tailored splitting is able to mixing the fast and slow flows flexibly, thus being able control the time complexity.
In fact, $T(\phi_h^{[1]})=T(\varphi_h^{[1]})+T(\varphi_h^{[3]})$, $T(\phi_h^{[2]})=T(\varphi_h^{[2]})+T(\varphi_h^{[4]})$ with $T(\cdot)$ being the number of operations of the one-step forward flow and
evolving $\varphi_h^{[3]}$, $\varphi_h^{[4]}$ are much more expensive than evolving $\varphi_h^{[1]}$, $\varphi_h^{[2]}$. Since $\varphi_h^{[3]}$ and $\varphi_h^{[4]}$ are (expensive) slow dynamics that can be evolved with less effort (e.g.\ larger step size, less stages) than fast dynamics when evolving together and tailored splitting makes it possible to control the number of expensive stages.
To compare, the number of expensive stages and the global errors of all schemes mentioned (in Sec.\ref{sec:split1} and Sec.\ref{sec:ourSplitting}) are listed in \cref{tb:cmp_schemes}.
In \cref{tb:cmp_schemes}, the order index $(o_0, o_1, \ldots)$ represents the power of $h$ in front of $\varepsilon^0, \varepsilon^1, \ldots$ (e.g.\ a scheme of order $(o_0, o_1, o_2)$ has a global error of $\Oc(h^{o_1}+\varepsilon h^{o_2} + \varepsilon^2 h^{o_3})$).

Moreover, since the hierarchical composition is a general framework, one can easily extend the family of numerical schemes, such as to construct higher order schemes, by applying a variety of existing splitting and composition methods.

\subsubsection{Tailored Splitting II: \texorpdfstring{$H=K_1+K_2+K_3$}{} with \texorpdfstring{$\frac{K_3}{K_1},\frac{K_2}{K_1} = \mathcal{O}(\epsK)$}{}}

We also provide an option to use the popular Wisdom-Holman \citep{wisdom1991symplectic} scheme for the orbital part, which works well for the specific but common setup of near Keplerian orbits; such systems usually correspond to $N-1$ well-separated bodies orbiting around a massive central body (indexed by $1$ in our  following description). This method is similar to the approach by \citet{Touma94}, except that their coordinates are set using the body-frame and we provided a higher-order implementation.

Isolating the Keplerian dynamics as $K_1$, combining the rotational kinetic energy with the rest translational kinetic energy as $K_2$, and putting the rest potential energy to $K_3$, $H=K_1+K_2+K_3$ with
\begin{align}
\left\{
\begin{aligned}
K_1(\mathsf q, \mathsf p) & = H_{Kepler}(\mathsf Q, \mathsf P) = \sum_{i=2}^N \frac{1}{2} \bm P_i^T \bm P_i / m_i - \frac{\Gc m_1 m_i}{\norm{\bm Q_i}}, \\
K_2(\mathsf p, \mathsf \Pi) & = \sum_{i=1}^N \frac{1}{2} \bm{\Pi}_i^T \bm J_i^{-1} \bm{\Pi}_i + \frac{\norm{\bm p_1 - \frac{m_1}{m_{tot}} \sum_{i=1}^N \bm p_i}^2}{2 m_1}, \\
K_3(\mathsf{q},\mathsf{R}) & = V(\mathsf{q}, \mathsf{R}) + \sum_{i=2}^N \frac{\Gc m_1 m_i}{\norm{\bm q_i - \bm q_1}}, \\
\end{aligned}
\right.
\end{align}
and $K_2, K_3 \ll K_1$.
Here, $V(\mathsf q, \mathsf R)$ is defined in \cref{eq:potential_energy}.
Note that $K_1$ represents Keplerian orbits in $\mathsf Q, \mathsf P$ variables, which are canonical democratic heliocentric variables \citep{duncan1998multiple} with
\begin{align}
\bm Q_i = 
\left\{
\begin{aligned}
    & \bm q_i-\bm q_1 & \quad i \neq 1, \\
    & \frac{\sum_{j=1}^N m_j \bm q_j}{m_{tot}} & \quad i = 1,
\end{aligned}
\right.
\end{align}
and
\begin{align}
    \bm P_i = 
    \left\{
        \begin{aligned}
            & \bm p_i - \frac{m_i}{m_{tot}} \sum_{j=1}^N \bm p_i & \quad i \neq 1, \\
            & \sum_{j=1}^N \bm p_i & \quad i=1. \\
        \end{aligned}
    \right.
\end{align}
So when evolving $K_1$ dynamics, additional steps of switching back and force between $(\mathsf q, \mathsf p)$ and $(\mathsf Q, \mathsf P)$ coordinates are necessary.
In terms of compositions, similarly, we first compose the flows of $K_2$ and $K_3$ together as $\varphi_h^{K,slow}$ via $\Cc_{\slow}^K$, then compose the flow of $K_1$ ($\varphi_h^{K,\fast}$) with $\varphi_h^{K, \slow}$ via a multiscale compositing method $\Cc_{\multi}^K$.
The error of such composition is the summation of the global errors of two methods $\Cc_{\slow}^K$, $\Cc_{\multi}^K$ (and the numerical error of evolving Keplerian orbits).

For instance, $\Kc_{\cdot 2}$ method in our package is based on choosing $\Cc_{\slow}^K$, $\Cc_{\multi}^K$ as $\Cc_{\Verlet}$ and $\Cc_{ABA22}$, and its global error is $\Oc(\epsK h^2)$.

\subsubsection{Which One to Use, the $\Tc$-series, the $\Mc$-series, or the $\Kc$-series Methods?}

In general, the orders of the $\Tc$-series methods are only $h$ dependent, while the $\Mc$-series and $\Kc$-series methods are $(h,\varepsilon)$ dependent and $(h,\epsK)$ dependent respectively. Here, $\varepsilon$ and $\epsK$ are system specific, and they affect the choice of method. 
For example, $\varepsilon\approx 10^{-6}$ and $\epsK \approx 10^{-3}$ in Solar system simulations with Earth being the only rigid body --
note that $\epsK$ represents the scale of the orbital planetary interactions while the $\varepsilon$ in Sec.\ref{sec:ourSplitting} represents the scale of the spin and the potential correction due to rigidity, so in practice, $\varepsilon \ll \epsK$. With the small parameters incorporated, the tailored splitting methods are usually more efficient.
In general, the $\Kc$-series methods specialize in near-Keplerian problems, while the $\Mc$-series methods are more generic and at the same time almost always faster than the $\Tc$-series methods with nearly no trade-offs of the accuracy; in fact, oftentimes the $\Mc$-series methods are both more accurate and more efficient due to delicate splittings and compositions\footnote{One should not be misled to think an error like $\mathcal{O}(h^4+\varepsilon h^2)$ is larger than $\mathcal{O}(h^4)$; for example, if $\varepsilon = h^2$, the former may actually be smaller due to different constant factors; see Sec.\ref{sec:NumericsEfficiency} for practical illustrations.}. 
However, $\Tc$-series methods are recommended for extreme cases with large $\varepsilon$ and $\epsK$ (e.g., a super fast spinning body might contribute to a large $\varepsilon$).




\subsection{Adding Non-Conservative Forces}

Non-conservative forces such as tidal forces and post Newtonian corrections are incorporated in the package. As the implemented schemes are based on symmetric splitting and composition
the corresponding non-conservative momentum update is inserted in the middle of the composition.
This is similar to how dissipative forces were added in \texttt{REBOUNDx} \citep{Tamayo20}.




\subsubsection{Tidal Forces}
We model the tidal dissipation between each pair of bodies using the constant time lag equilibrium tide model, following \cite{Hut81, eggleton1998equilibrium}. Note that we only adopted the dissipative component in the tidal force here.  %
The expression of the acceleration of the tidal force is
\begin{align}
\begin{split}
    & \bm a_{\text{host},\text{guest}}^{tidal} \EQ -\frac{9 \sigma m_{\text{guest}}^2 A^2}{2 \mu_{\text{host},\text{guest}} d^{10}} \bigg[ 3 \bm d \left( \bm d \cdot \dot{\bm d} \right) \\
    & \quad + \left( \left( \bm d \times \dot{\bm d} \right) - \bm \omega d^2 \right) \times \bm d \bigg].
\end{split}
\label{eq:equilibrium_tide}
\end{align}
Here, $m_{\text{host}},m_{\text{guest}}$ denote the masses of the host and the guest body respectively;
$\bm d \EQ \bm q_{\text{guest}} - \bm q_{\text{host}}$ denotes the relative position of the guest body;
$d \EQ \norm{\bm d}$ denotes the distance between two bodies;
$\mu_{\text{host},\text{guest}} \EQ \frac{m_{\text{host}} \cdot m_{\text{guest}}}{m_{\text{host}}+m_{\text{guest}}}$ denotes the reduced mass;
$\bm \omega$ denotes the angular velocity of the host body under the reference frame (the inertia frame);
the constant $\sigma$ denotes the dissipation rate;
$A$ is defined as 
\begin{align}
    A \EQ \frac{d^5 Q}{1-Q},
\end{align}
with $Q$ the constant that measures quadrupolar deformability of the objects.

The dissipation rate $\sigma$ is related to the time lag $\tau$ by the following formula,
\begin{align}
    \tau \EQ \frac{3 \sigma d^5}{4 \Gc} \cdot \frac{Q}{1-Q}.
\end{align}

We may integrate the tidal acceleration $\bm a_{i,j}^{tidal}$ to our integrator after each time step by considering all pairs of bodies under tidal interactions. Note that each $\bm a_{i,j}^{tidal}$ only calculates the force of each (host, guest) pair, where each pair $(i,j)$ treat $i$ as the extended object and $j$ as the point mass object. Thus, the equations of motion due to tidal dissipation are listed below:
\begin{align}
\left\{
    \begin{aligned}
        \bm p_i & \EQ \bm p_i + h \sum_{j \neq i} \left( - \mu_{i,j} \, \bm a_{i,j}^{tidal} + \mu_{j,i} \, \bm a_{j,i}^{tidal} \right), \\
        \bm \Pi_i & \EQ \bm \Pi_i - h \sum_{j \neq i} \mu_{i,j} \, \bm R_i^T \left( \left( \bm q_j - \bm q_i \right) \times \bm a_{i,j}^{tidal} \right). \\
    \end{aligned}
\right.
\end{align}

\subsubsection{General Relativistic Effects}
\label{sec:GR}

We added the first-order post-Newtonian correction for general relativistic effects following e.g., \citet[][]{Blanchet06}. For planetary systems, we assumed the central object (the host star) is much more massive comparing to the surrounding objects (the planets). Thus, we only included the correction due to the star. The acceleration can be expressed as the following \citep[e.g.,][]{Anderson75, Benitez08}:
\begin{align}
    {\bf a} = \frac{G M_{star}}{r^3 c^2}\Big[\Big(\frac{4GM_{star}}{r}-{\bf v}^2\Big){\bf r} + 4({\bf v}\cdot{\bf r}){\bf v}\Big]
\end{align}

\subsection{Asymmetric Case}\label{sec:nonsymmetric}

For planets with close-in orbits, both rotational flattening and tidal force distort the shape of the planets, and lead to non-axial symmetric distortions. Thus, we include the option to study non-axial symmetric planets here, where one could specify the principal moment of inertia or the semi-axes of the planets directly. 
In this case, $J_i^{(1)} \neq J_i^{(2)} \neq J_i^{(3)}$ in $\bm J_i$, and our splitting of the Hamiltonian is modified as
the previous Hamiltonian plus $H_{asymmetric}$, where
\begin{align}
    H_{asymmetric} (\mathsf{R}, \mathsf{\Pi}) = \sum_{i} \left( \frac{1}{J_i^{(2)}} - \frac{1}{J_i^{(1)}} \right) \cdot \frac{{\Pi_i^{(2)}}^2}{2},
\label{eq:H5}
\end{align}
and $H_{asymmetric} \ll H_3$ in \cref{eq:split2}.

The dynamics of \cref{eq:H5} is
\begin{align}
\left\{
\begin{aligned}
\bm R_i(t) & = \mathrm R_y \left( \delta \Pi_i^{(2)} t \right) \, \bm R_i(0), \\
\bm \Pi_i(t) & = \mathrm R_y (-\delta \Pi_i^{(2)} t) \, \bm \Pi_i(0), \\
\end{aligned}
\right.
\end{align}
with $\delta=\frac{1}{J_i^{(2)}} - \frac{1}{J_i^{(1)}}$.
Based on the symmetric schemes in Sec.\ref{sec:ourSplitting}, we simply evolve $H_{asymmetric}$ half step at the beginning and the end of each step.

\section{Code Validation}
\label{sec:codecomp}

\subsection{Numerical Tests}
\label{sec:numericalTests}


\subsubsection{Conservation Properties}
The conservation properties of the integrators are tested for $\Tc_4$ and $\Mc_{42}$ schemes in the \textit{Sun-Earth-Moon} system with all three bodies being rigid. 
As shown in \cref{fig:energy_momentum}, both schemes conserve linear momentum and angular momentum (except there are arithmetic inaccuracies due to machine precision), and the energies exhibit no drift but only fluctuate at magnitudes $\Oc(h^4)$ and $\Oc(h^4+\varepsilon h^2)$ for $\Tc_4$ and $\Mc_{42}$ respectively.
In the simulations, tides are not included (otherwise the system is no longer conservative) and initial conditions are set to be the data of epoch J2000 from JPL HORIZONS System. 

Here floating-point format is set to be double-precision, although our package can also use long-double or single.

Our integrators also (exactly) preserve symplecticity when tidal dissipation is excluded, because they are Hamiltonian splitting schemes. The definition of symplecticity in a non-Euclidean setup is not completely trivial, but the symplecticity of splitting approaches considered here has been established in, e.g., \cite{tao2020variational} (with $r(t)=0$; otherwise one gets a more general result, namely conformal symplecticity).


\begin{figure}
\centering
\includegraphics[width=\linewidth]{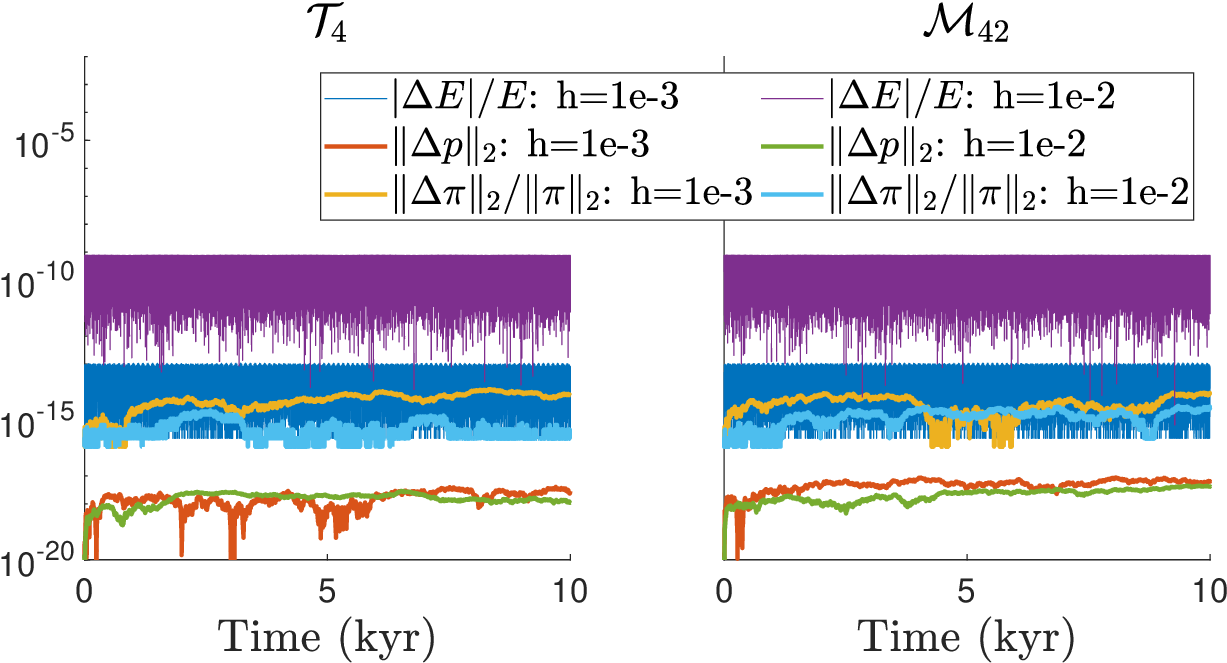}
\caption{Conservation of momentum maps and near conservation of energy by our methods. Relative error of energy $E$, error of the total linear momentum $\bm p$ and relative error of the total angular momentum $\bm \pi$ are measured. $\bm p(0)=[0,0,0]$. The potential order is set to be $2$.}
\label{fig:energy_momentum}
\end{figure}

\subsubsection{Convergence Tests and Accuracy Comparisons}
We now numerically illustrate how the integration error depends on $h$ for different numerical schemes, which include both methods we implemented in \GRIT{} and \smercuryt{}. 
\texttt{SMERCURY-T} is a concurrent simulation package that can evolve an object's spin-axis under obliquity tide \citep{Kreyche21}. It is based on the \texttt{Mercury} simulation package \citep{Chambers99}. Specifically, it includes a subroutine to evolve the spin-axis dynamics following the procedure outlined in \citep{Lissauer12}, which is based on the Lie-Poisson integrator of rigid-body dynamics developed by \citet{Touma94}. In addition, it includes a subroutine for obliquity tide following the algorithms outlined in \citet{Bolmont15}. The model for tidal interaction of \texttt{SMERCURY-T} is different from what we included in \GRIT{}, which natually contains both obliquity tide and tidal effects due to non-tidally synchronized orbits. Thus, we focus on the rigid-body dynamics here, where we do not include tidal interactions in our convergence test. We also turned off, in comparisons presented here, our rigid-body rigid-body interaction option, which is mainly for accurate simulations of rigid bodies' close encounters, because such interactions are supported only in \GRIT{}.


We first test on the \textit{Sun-Earth-Moon} system (\cref{fig:convergence_SEM}).
One observation in this case is, if the step size is too large so that splitting into $H_1+H_2+H_3+H_4$ (\GRIT{}'s $\Mc_{42}$, $\Mc_{642}$) doesn't work, \smercuryt{} doesn't work either (unlike expected by some). More precisely,
with $h=2\cdot10^{-2}$ yr, $\Mc_{42}$ and \smercuryt{} cannot resolve the the motion of the Moon orbiting around the Earth, whose period is a month, and even the performance of the 6th order method $\Mc_{642}$ is not ideal, and significant errors are observed in all methods.
Accuracy is improved for stepsizes below this stability limit, and the rate of improvement is, as expected, dependent on the order of the numerical scheme. Consequently, higher order methods such as $\Mc_{42}$ and $\Mc_{642}$ show substantially smaller errors when smaller step sizes are applied (readers interested in understanding this together with computational costs are referred to Sec.\ref{sec:NumericsEfficiency}).

We then test on a non-Keplerian system (note \smercuryt{} performs well for near Keplerian problems as designed): an Earth-like planet orbiting around two stars alternatively in a stellar binary system (\cref{fig:convergence_binary}).
As there is no single body that has the dominant mass of the system and the the planet is alternatively captured by the two stars, the planetary orbit is not nearly Keplerian, and splitting into $H_1+H_2+H_3+H_4$ is more accurate than \smercuryt{} for all choices of step sizes here.
Specifically, as shown in \cref{fig:convergence_binary}, the orbital position of \smercuryt{} saturates to $\mathcal{O}(1)$ relative error after a relatively short period of time, no matter if $h=10^{-3}, 10^{-4}, \text{or } 10^{-5}$ yr. The orbital inaccuracy naturally affects the spin angle as well. Meanwhile, $\Mc_{42}$ and $\Mc_{642}$ do not have this issue.

For reproducibility, the initial condition used is
\begin{align*}
& \bm q_{star_1} = \begin{bmatrix}
    -0.5 & 0 & 0 \\
\end{bmatrix}^T,\,
&& \bm v_{star_1} = \begin{bmatrix}
    0 & -0.0086012119 & 0 \\
\end{bmatrix}^T, \\
& \bm q_{star_2} = \begin{bmatrix}
    0.5 & 0 & 0 \\
\end{bmatrix}^T,\,
&& \bm v_{star_2} = \begin{bmatrix}
    0 & 0.0086012119 & 0 \\
\end{bmatrix}^T, \\
& \bm q_{planet} = \begin{bmatrix}
    1.16 & 0 & 0 \\
\end{bmatrix}^T,\,
&& \bm v_{planet} = \begin{bmatrix}
    0 & 0.0164271047 & 0 \\
\end{bmatrix}^T
\end{align*}
in units of $AU$ and $AU/day$, and $m_{star_1} = m_{star_2} = 0.5 m_{\odot}$, $m_{planet} = m_{\oplus}$.


\begin{figure}
\includegraphics[width=\linewidth]{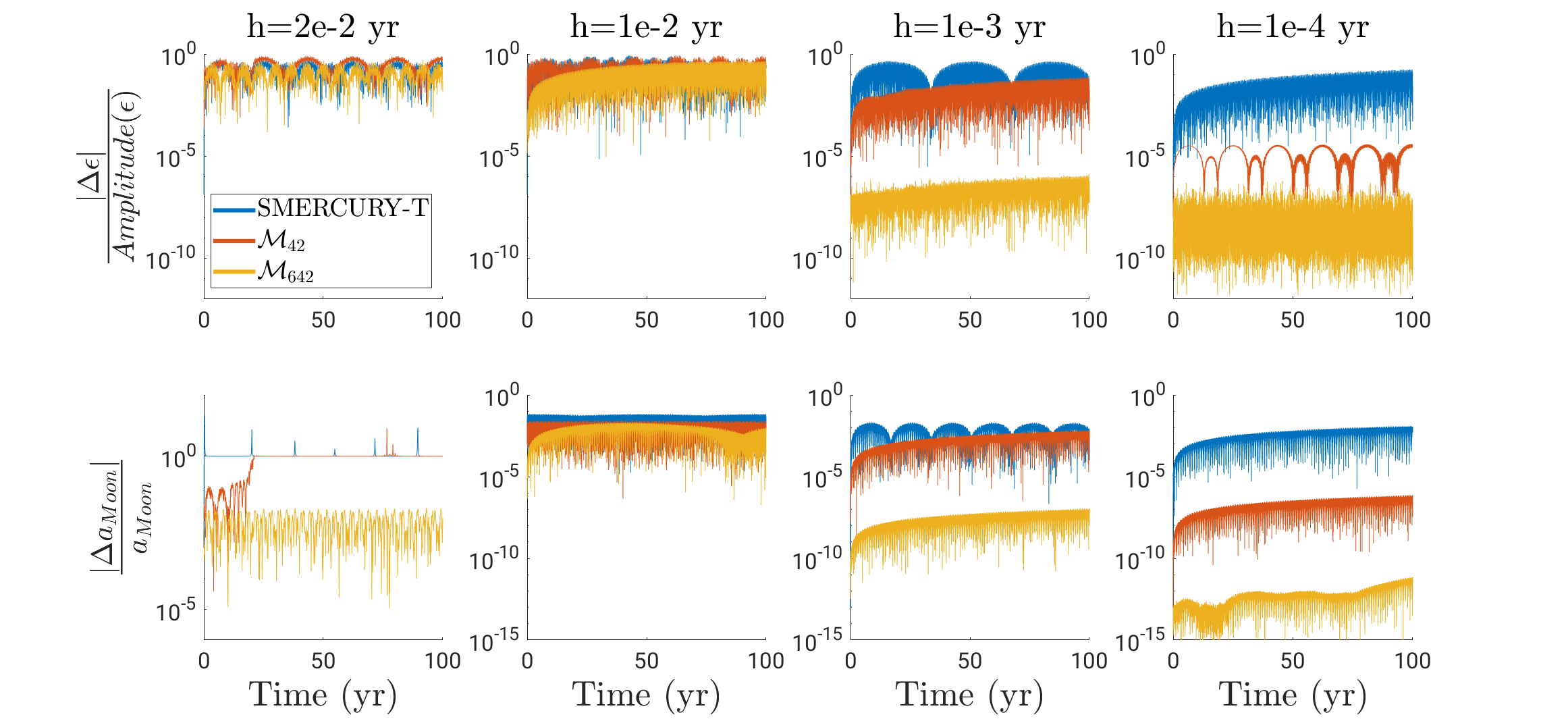}
\centering
\caption{Error of Earth's obliquity ($\epsilon$) over the range of $\epsilon$'s fluctuation and the relative error of the semi-major axis of the Moon for the \textit{Sun-Earth-Moon} system. Earth, Sun, Moon are rigid body, point mass and point mass respectively. The benchmark is simulated using the $\Tc_6$ scheme with $h=10^{-5}$ yr.}
\label{fig:convergence_SEM}
\vskip 0.1in
\end{figure}



\begin{figure}
\includegraphics[width=\linewidth]{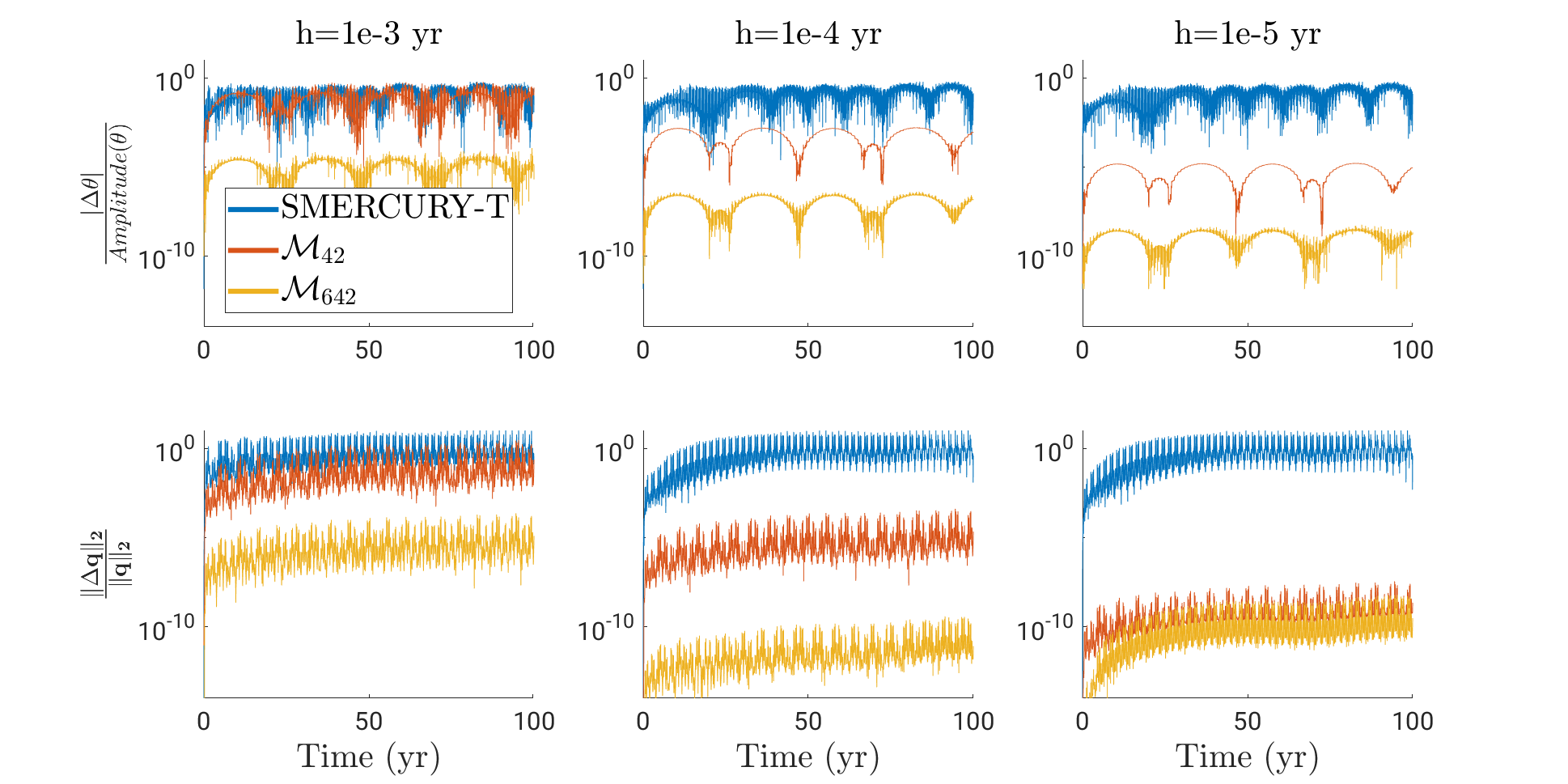}
\centering
\caption{Error of spin angle (the angle between the angular momentum and the $z$-axis of the inertia frame) and position for an Earth-like planet orbiting around two stars alternatively. The Earth-like planet, star 1 and star 2 are set to be rigid body, point mass and point mass respectively. The benchmark is simulated using the $\Tc_6$ scheme with h=1e-5.}

\label{fig:convergence_binary}
\end{figure}

\subsubsection{Investigation of Efficiency}
\label{sec:NumericsEfficiency}

We now demonstrate the improved computational efficiency of the tailored splitting schemes. A comparison of the time efficiency among the traditional splitting method $\Tc_4, \Tc_6$ and the tailored splitting scheme $\Mc_{42},\Mc_{642},\Kc_{\cdot 2}$ in the $10$ rigid body (Sun with 8 planets and the Moon) is shown in \cref{tb:time_10rigid}.
$\Mc_{42}$ ($\Mc_{642}$) is about twice the speed of $\Tc_4$ ($\Tc_6$) with comparable integrating accuracy.
Note \smercuryt{} cannot be compared against here, because its currently available version\footnote{\url{https://github.com/SMKreyche/SMERCURY-T/tree/cbc25299825559f255cee096c7650f379af41aa5}}
can only set one of the objects as rigid-body.

\begin{table}[ht]
\centering
\begin{tabular}{ c | c | c } 
 $h=10^{-3}$ yr & Wall time (s) & MAE of Earth's Obliquity\\ 
 \hline
 $\Tc_4$     & 30.573 & 1.996646e-05 \\
 $\Mc_{42}$  & 15.488 & 1.997454e-05 \\
 $\Tc_6$     & 72.55  & 1.728156e-08 \\ 
 $\Mc_{642}$ & 40.626 & 4.365093e-10 \\
 $\Kc_{\cdot2}$ &  14.673 & 2.186379e-05 \\
 SMERCURY-T  &   N/A      & N/A          \\
 \hline
 $h=10^{-4}$ yr &  & \\ 
 \hline
 $\Tc_4$     & 273.71 & 2.680897e-09 \\
 $\Mc_{42}$  & 140.58 & 3.817091e-09 \\
 $\Tc_6$     & 708.26 & 8.689218e-11 \\ 
 $\Mc_{642}$ & 395.52 & 2.039980e-10\\
 $\Kc_{\cdot2}$ &  131.56 & 1.292609e-05 \\
 SMERCURY-T  &   N/A      & N/A          \\
 \hline
 $h=10^{-5}$ yr &  & \\ 
 \hline
 $\Kc_{\cdot2}$ & 1299.9 & 1.378428e-07 \\
\end{tabular}

\caption{Efficiency comparison among scheme $\Tc_4$, $\Tc_6$, $\Mc_{42}$, $\Mc_{642}$ and $\Kc_{\cdot 2}$. The Solar system with 8 planets and the Moon (10 rigid bodies in total) is simulated till $1000$ years with $h=10^{-3}$ yr and $h=10^{-4}$ yr for all schemes using a single thread. The benchmark is simulated using the $\Tc_6$ scheme with $h=10^{-5}$ yr and long-double precision. Mean absolute errors (MAE) of the Earth's obliquity (rad) are measured. Data is output every $0.1$ yr.}
\label{tb:time_10rigid}
\end{table}

To gain additional understanding of the performance of \GRIT{}, complementary results that include comparisons to \smercuryt{} are also provided. For a fair comparison, we continue using the Solar system example, which is a near Keplerian problem that \smercuryt{} specializes in, but we had to alter it by setting only the Earth to be a rigid body and all others as point masses. Results are in \cref{tb:time_1rigid_9pointmass}, where $\Mc_{42}$ shows improved accuracy over \smercuryt{}, while $\Mc_{642}$ is even more accurate however with traded-off time complexity.

\begin{table}[ht]
\centering
\begin{tabular}{ c | c | c } 
 $h=10^{-3}$ yr & Wall time (s) & MAE of Earth's Obliquity\\ 
 \hline
 $\Mc_{42}$    & 6.408  & 1.997119e-05 \\
 $\Mc_{642}$   & 23.122 & 3.841649e-10 \\
 \smercuryt{}  & 8.638  & 2.157662e-05 \\
 \hline
 $h=10^{-4}$ yr & & \\ 
 \hline
 $\Mc_{42}$    & 53.782 & 3.833661e-09 \\
 $\Mc_{642}$   & 216.09 & 1.990336e-10 \\
 \smercuryt{}  & 39.079 & 1.903458e-05 \\
\end{tabular}
\caption{Efficiency comparison among scheme $\Mc_{42}$, $\Mc_{642}$ and SMERCURY-T. The Solar system with 8 planets and the Moon (9 point masses and 1 rigid body (the Earth) in total) is simulated till $1000$ years with $h=10^{-3}$ yr and $h=10^{-4}$ yr for all schemes using a single thread. The benchmark is simulated using the $\Tc_6$ scheme with $h=10^{-5}$ yr and long-double precision. Mean absolute errors (MAE) of the Earth's obliquity (rad) are measured. Data is output every $0.1$ yr.}
\label{tb:time_1rigid_9pointmass}
\end{table}

Also for the sake of fairness, note that wall-clock counts are platform dependent and therefore should only be used as a qualitative (not quantitative) indicator. Experiments reported here are conducted on a machine with AMD Ryzen 7 3700X 8-Core Processor, 16 GB memory and the Linux distribution of openSUSE Leap 15.2. \GRIT{} was compiled using GNU C\texttt{++} compiler and \smercuryt{} using GNU Fortran compiler, both with the default compilation options. 
Single-thread is used for experiments in both \cref{tb:time_10rigid,tb:time_1rigid_9pointmass} for fairness (note a parallelization option is available in \GRIT{}; we recommend turning it on when the simulated system has large numbers of rigid objects).
We also noted that \smercuryt{} slows down more significantly than \GRIT{} when its integration is outputted more frequently, and thus chose a large output step size to reduce \smercuryt{}'s I/O overhead so that the focus can be on the integration time itself.

\subsubsection{Summary of Sec.\ref{sec:numericalTests} Numerical Tests}

In general, \GRIT{} suits not only near-Keplerian orbits but also non-Keplerian ones. Multiple splitting and composition options are provided in \GRIT{} too so that, if preferred, a user can choose the classical Wisdom-Holman scheme for the orbital part which specializes in near-Keplerian orbits (e.g., $\mathcal{K}_{\cdot 2}$). Furthermore, equipped with higher order methods, \GRIT{} integrations have errors that decrease very rapidly as step size decreases in a reasonable range.

\subsection{
Comparison with Secular Results}
To further verify the accuracy of our integration package, we compare our simulation results to secular theory here. We include two examples: the first one integrate the obliquity variation of a moon-less Earth without the influence of tidal interactions, and the second example considers tidal interactions between a hypothetical Earth-Moon system. We find good agreement between our simulation package with the results of the secular theory.

\subsubsection{Obliquity Variations of a Moon-less Earth}
Spin-orbit resonances lead to large obliquity variations for a moonless Earth \citep{Laskar93a}, and this classical example can serve as a test case for our simulation package. Specifically, planetary companions of the Earth (from Mercury to Neptune) all perturb Earth's orbit and lead to forced oscillations in the orbital plane of Earth. At the same time, torquing from the Sun leads to precession of Earth's spin axis. The natural precession frequency coincides with the forcing frequencies and drives resonant obliquity variations of Earth. Tidal interactions are weak in this case, so we neglected tidal effects in our code and considers the dynamical coupling between the planetary spin axes and its orbit. 

We include the eight Solar System planets in this system, and we adopt the position and velocity of the Solar System planets from JPL database \citep{giorgini1996jpl}. We only treat the Earth as a rigid object with oblateness of $0.00335$, and set the other planets and the Sun as point particles. 

\begin{figure}
\includegraphics[width=0.8\linewidth]{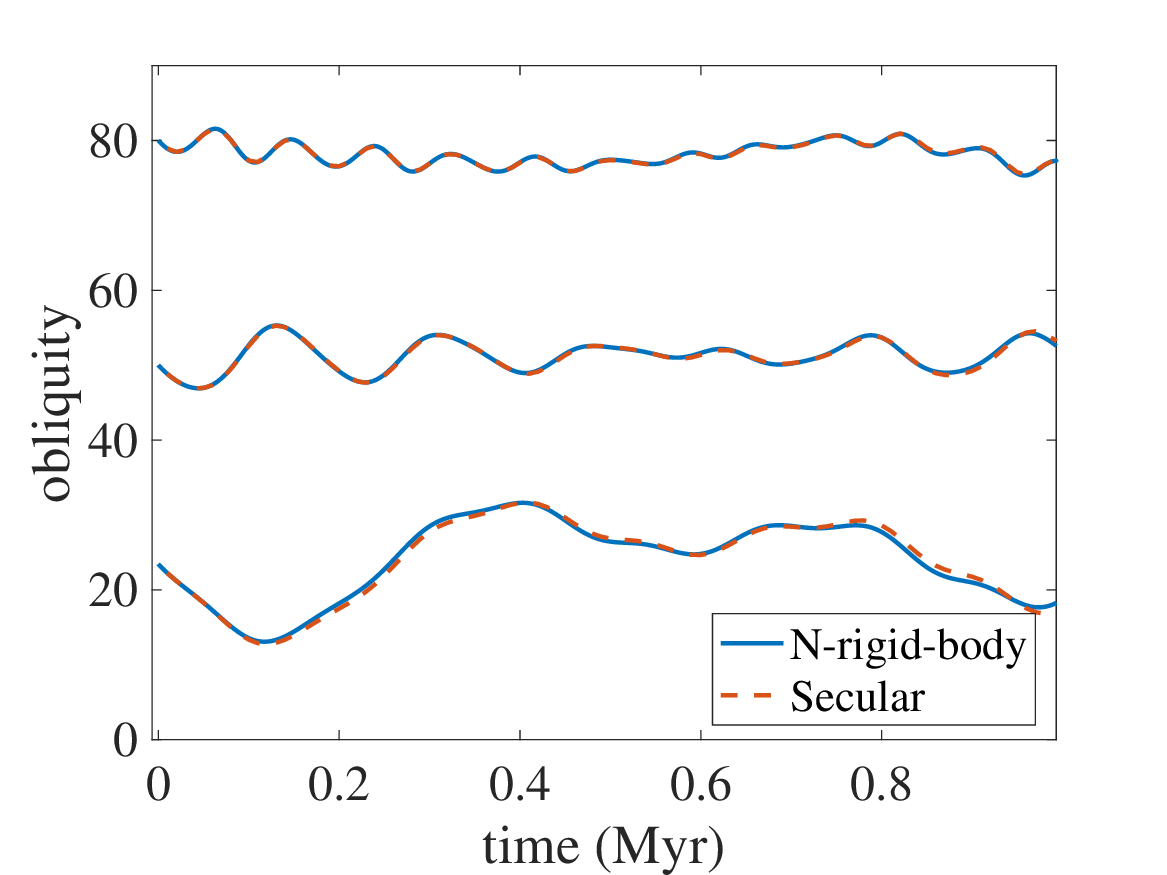}
\centering
\caption{Obliquity variations of a moon-less Earth. The solid lines represent the rigid-body simulations, and the dashed lines represent the secular results following \citep{Laskar93}. The results of our simulation package agree well with the secular theory.}
\label{fig:obl}
\end{figure}

Figure \ref{fig:obl} shows the comparison of the obliquity variations of the moon-less Earth with that from the secular theory shown in \citet{Laskar93a, li2014spin}. We included three examples starting with different initial obliquities, and all of them show good agreement with the secular results. In particular, below $\sim 40^\circ$, large obliquity variations can be seen due to the spin-orbit resonances. We chose a time step of $10^{-4}$yrs, in order to resolve the spin of the Earth. The fractional change in energy is at the order of $10^{-14}$ and the fractional change in angular momentum is at the order of $10^{-12}$ for all the three runs with different initial obliquities.

\subsubsection{Tidal Interactions of a Hypothetical Earth-Moon System}

To illustrate the accuracy of our simulation package including tidal interactions, we use a simple hypothetical Earth-Moon two-body system here. We set the initial semi-major axis and eccentricity to be $0.0018$AU and $0.4$. For the Earth, we set the spin period to be $1$ day, oblateness to be $0.00335$, love number to be $0.305$ and tidal time lag to be $698$sec. For the Moon, we set the spin period to be $14$ days, oblateness to be $0.0012$, love number to be $0.02416$ and tidal time lag to be $8,639$sec.
\begin{figure}
\includegraphics[width=0.8\linewidth]{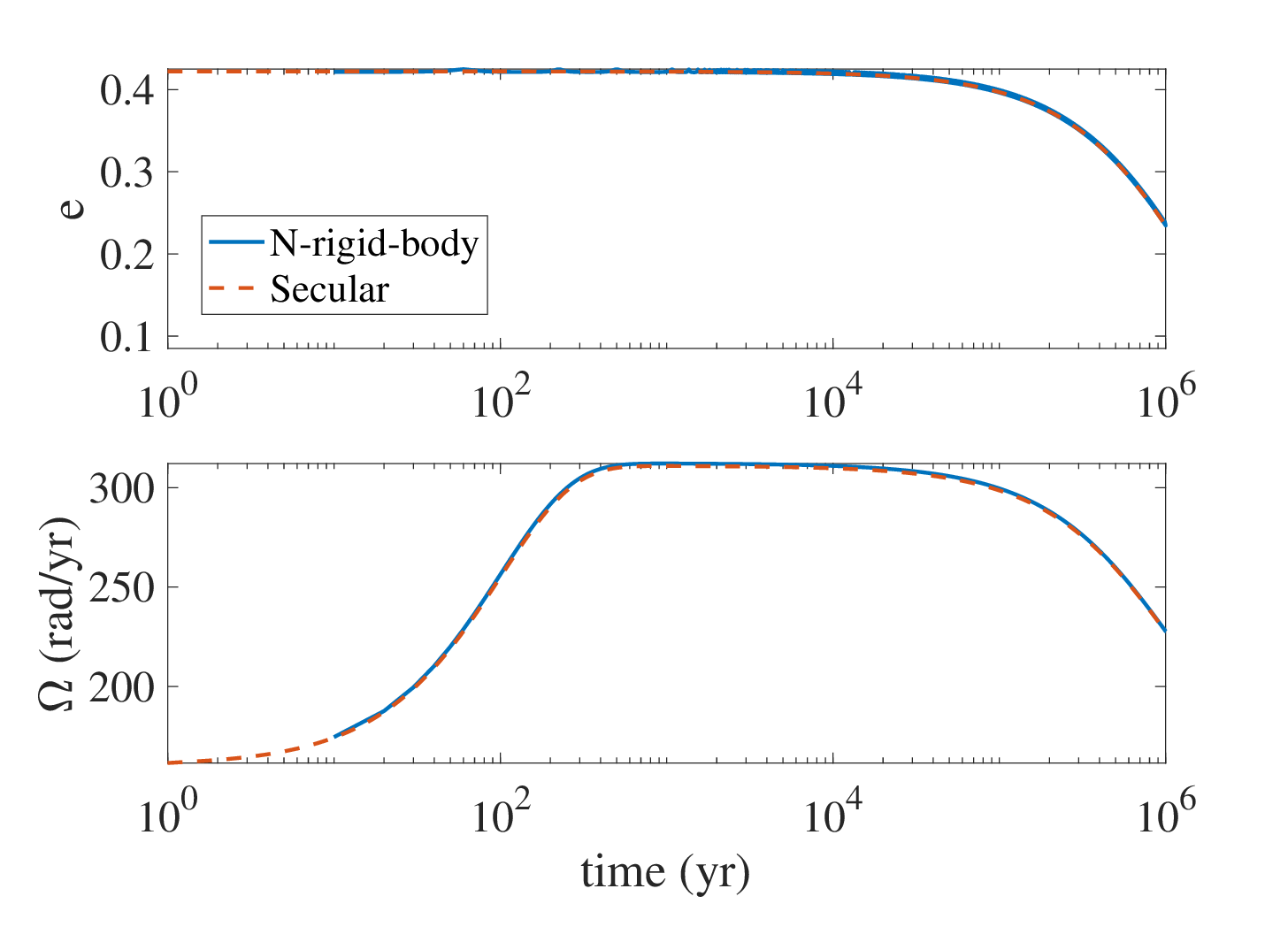}
\centering
\caption{Tidal interaction in a hypothetical Earth-Moon system. The spin rate ($\Omega$) increases rapidly to the pseudo-synchronized state, which is then followed by a much slower decay as the orbit circularizes under tide. The solid lines represent the simulation results and the dashed lines represent the secular results. The results of our simulation package agree well with the secular theory.}
\label{fig:tid}
\end{figure}

Figure \ref{fig:tid} shows the agreement between our simulation package (solid lines) with the secular results (dashed lines). The secular results are obtained following \citep{eggleton1998equilibrium}. The upper panel plots orbital eccentricity versus time and the lower panel plots the spin rate of the Moon versus time. It shows that the spin rate of the Moon increases to the pseudo-synchronized state within a few hundred years, and then slowly decreases as orbital eccentricity decays due to tide. We chose a time step of $10^{-4}$yr to resolve the spin of the Earth, and the total fractional change in angular momentum is $7\times10^{-12}$.

\section{Applications to Trappist-I}
\label{sec:Trap}
Spin-orbit coupling leads to profound dynamics in planetary systems, in particular for planets with close-in orbits. For Trappist-I, it is shown that tidal and rotational deformation of the planets leads to orbital precession that can be detected in the TTV measurements \citep{Bolmont20}. In addition, strong interactions between planets in resonant chains can push habitable zone Trappist-I planets into non-synchronous states \citep{Vinson19}. Recently, a high accuracy differentiable N-body code for transit timing and dynamical modeling has been developed, with applications to Trappist-I, yet tidal and GR effects have not been included \citep{Agol21}. 

To illustrate the effects of the spin-orbit coupling, we use our numerical package to simulate the long-term dynamics of spin-axis variations, as well as the short-term effects on TTV for Trappist-I. We note that both our numerical package and \textit{POSIDONIUS} \citep{Blanco-Cuaresma17, Bolmont20} consider tidal effects and spin-orbit coupling, beyond point mass dynamics based on Newtonian interactions and GR corrections. In particular, \citet{Bolmont20} obtained both dissipative and non-dissipative forces from tidal dissipation and tidal torquing separately, and considered forcing due to planetary rotational deformation. 

Using our numerical package, we find that the habitable zone planets can indeed allow large spin-state variations, consistent with the findings by \citep{Vinson19}. In addition, we find that allowing the non-synchronized states could lead to significantly larger TTVs, which could reach a magnitude of $\sim min$ in ten-year timescale.

\subsection{System set up}
\label{sec:setup}
We use the same orbital initial condition and physical properties for the planets in Trappist-I following \citep{Bolmont20} (Table A.2 in \citealt{Bolmont20}), in order to compare the magnitude of TTVs, and we use the same reference tidal parameters for the star and the planet (e.g., $k_{2f,*} = 0.307$, $k_{2f,p} = 0.9532$, $\Delta \tau_{p} = 712.37$sec). The $Q$ coefficient in the tidal model can then be calculated ($k_2 = Q/(1-Q)$) \citep{eggleton1998equilibrium, Eggleton01, Fabrycky07}.

To calculate the moment of inertia along the three principal axes ($A$, $B$, $C$), we follow the derivation by \citep{VanHoolst08}, assuming a homogeneous model for simplicity and assuming the rotation velocity of the planet is close to the orbital velocity. Specifically, the moment of inertia can be expressed as the following:
\begin{align}
    A&=I_0(1-\frac{1}{3}\alpha-\frac{1}{2}\beta) \nonumber \\
    B&=I_0(1-\frac{1}{3}\alpha+\frac{1}{2}\beta) \nonumber \\
    C&=I_0(1+\frac{2}{3}\alpha) \nonumber
\end{align}
where
\begin{align}
    \alpha&=\frac{5}{4}q(1+k_f) \nonumber \\
    \beta&=\frac{3}{2}q(1+k_f) \nonumber
\end{align}
and $k_f$ is the love number, and $q$ is the ratio of the centrifugal acceleration to the gravitational acceleration. We assume all the planets have the same radius of gyration squared $rg_p^2 = 0.3308$ following \citet{Bolmont20}, and we include in Table \ref{tb:parameters} the moment of inertia of the planets.

Moreover, because the planets are very close to their host star, general relativistic precession plays a non-negligible role in the transit time. Thus, we also included the first order post-Newtonian correction in our simulation code (see \textsection \ref{sec:GR}).


\begin{table}
\begin{center}
\begin{tabular}{| c | c c c |}
\hline
 Planet & A (M$_\odot$km$^2$) & B (M$_\odot$km$^2$) & C (M$_\odot$km$^2$) \\ 
 \hline
b & 50.5245 & 50.8474 & 50.955 \\
c & 54.3319 & 54.4432 & 54.4803 \\ 
d & 7.6321 & 7.6396 & 7.6421 \\ 
e & 26.384 & 26.391 & 26.3933 \\
f & 39.9836 & 39.9898 & 39.9918 \\
g & 58.8644 & 58.8698 & 58.8716 \\ 
h & 7.8901 & 7.8904 & 7.8905 \\
\hline
\end{tabular}
\label{tb:parameters}
\end{center}
\caption{Principal moment of inertia of the Trappist-I planets.}
\end{table}

\subsection{Transit-timing Variations}
\label{sec:TTV}

The measurement of transit-timing variations (TTVs) is a powerful method to derive physical properties of planets, in particular masses and eccentricity of planets \citep{Agol18}. Most studies consider only point-mass dynamics. However, full-body dynamics including tidal effects and distortion of the planets could also play an important role \citep{Miralda-Escude02, Heyl07, Ragozzine09, Maciejewski18}. It is recently shown that new measurements of the TTV of the Trappist-I system lead to significant increase in the mass estimate for planet b and c, which may be due to unaccounted physical processes including tidal effects and rotational distortion of the planets \citep{Grimm18,Agol20, Bolmont20}. Thus, we use our simulation package to estimate the TTV of the inner planets in Trappist-I here as an example, in comparison with the study by \cite{Bolmont20}.

\begin{figure*}
\includegraphics[width=0.8\linewidth]{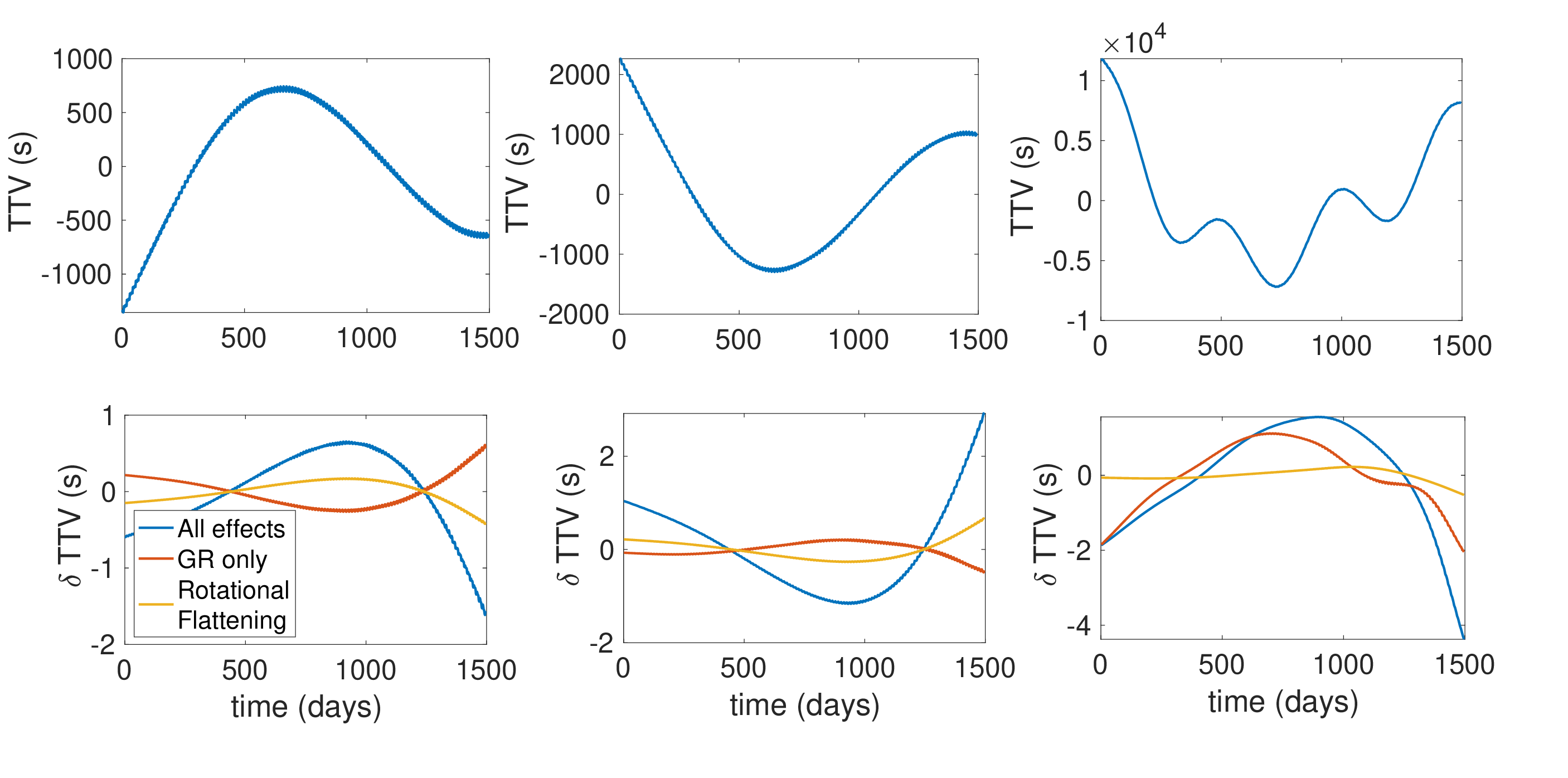}
\centering
\caption{Transit-timing variations (TTVs) of planets (b, c, d) in Trappist-I. The upper panels show TTVs of the planets assuming they are point mass particles and neglect effects due to GR. The lower panels show the differences in TTVs due to GR, rotational flattening of the planets and all the effects (GR, rotational flattening, tidal precession and tidal dissipation combined). The differences due to GR and rotational flattening are consistent with the results in \citet{Bolmont20}, while assuming the planets to be rigid bodies, the TTV differences are larger.}
\label{fig:TTV}
\end{figure*}

We include the result of the transit-timing variations for Trappist-I b,c and d in $1500$ days in Figure \ref{fig:TTV}, to compare our results with those in \citet{Bolmont20}. Similar to Figure 1 in \citet{Bolmont20}, the upper panels show the transit timing variations assuming the planets are all point-mass particles, and the lower panels show the differences in the TTVs due to different effects. The differences due to GR and rotational flattening of the planets computed using our simulation package are agreeable with that in \citet{Bolmont20}. Different from \citet{Bolmont20}, we assume the objects are rigid-bodies when considering tidal interactions with the central star using our rigid-body simulator. This leads to slightly larger TTV differences. We note that the magnitude of the differences in the TTVs depend on the misalignment between the elongated principal axis and the location direction of the planet from the central star. For the illustrative example, we assume the planets all start with their long-axes perfectly aligned to the direction of the central star. 

\begin{figure}
\includegraphics[width=0.8\linewidth]{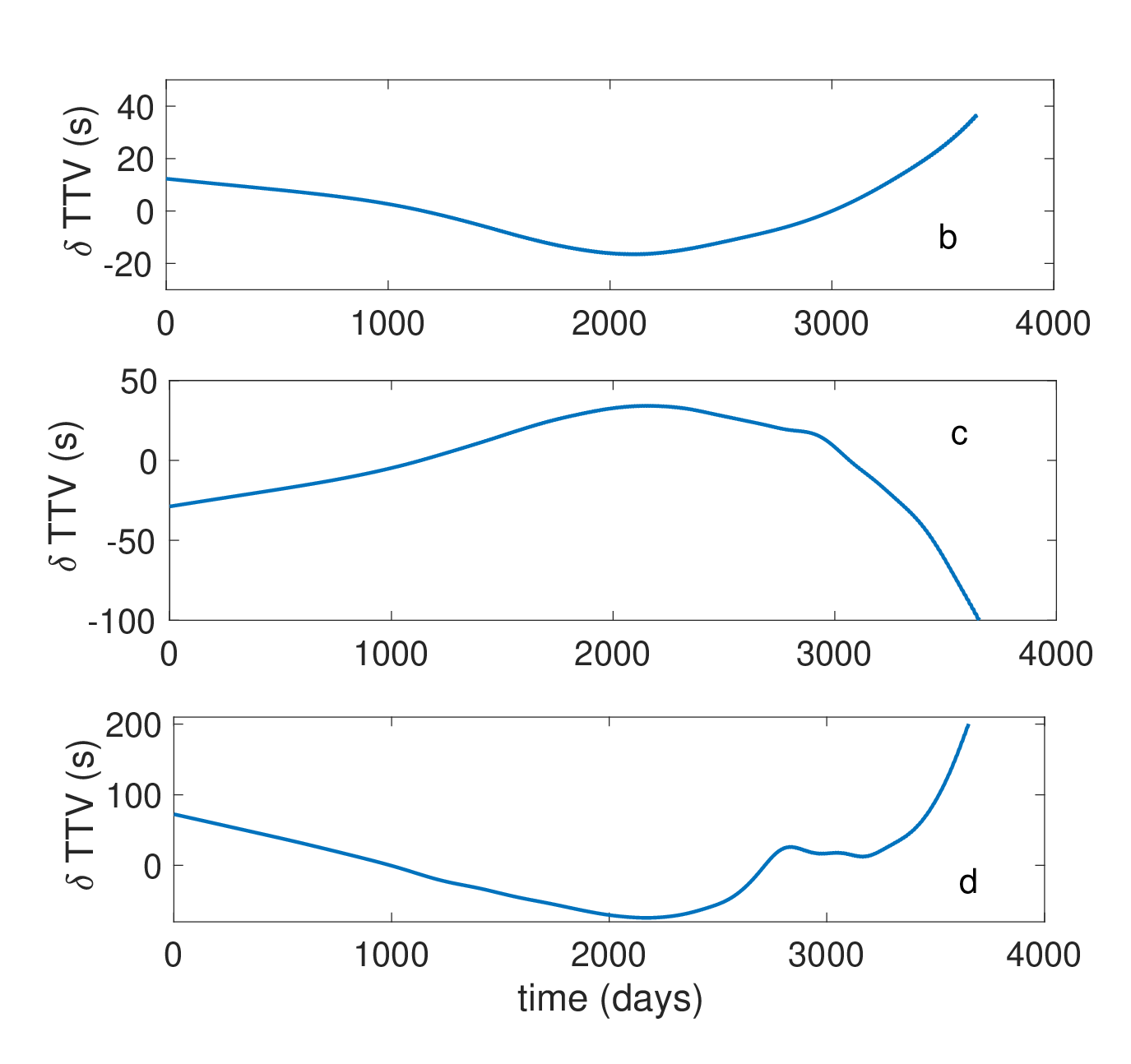}
\centering
\caption{Differences in transit-timing variations (TTVs) of planets (b, c, d) in Trappist-I over ten year measurements. With larger spin-misalignment, transit-timing variations could reach $\sim$ mins.}
\label{fig:10yrTTV}
\end{figure}

As the system evolve further, the misalignment could be excited to larger values (as discussed further in section \ref{sec:LT}). The differences in TTVs could reach $\sim 5$sec for 1500 days, and a few minutes in 10 year measurements, shown in Figure \ref{fig:10yrTTV}. A detailed study of how the TTVs depend on the physical properties of the planets (e.g., the love number, tidal time lag, etc.) 
is out of the scope of this paper, and will be discussed in a follow up project.

\subsection{Long-term dynamics}
\label{sec:LT}

Long-term dynamics of spin-axes of planets, in particular their synchronized states, play an important role in the atmosphere circulation of the planets. When the planets are tidally locked, the extreme temperature differences on one side of the planet facing the star from the other side may lead to the collapse of planetary atmosphere \citep{Kasting93, Joshi97, Wordsworth15}. For Trappist-I, \citet{Vinson19} developed a framework studying the spin-axis variations of the planets and found that the mean motion resonant chain could drive the habitable zone planets out of the synchronized state.

\begin{figure}
\includegraphics[width=0.9\linewidth]{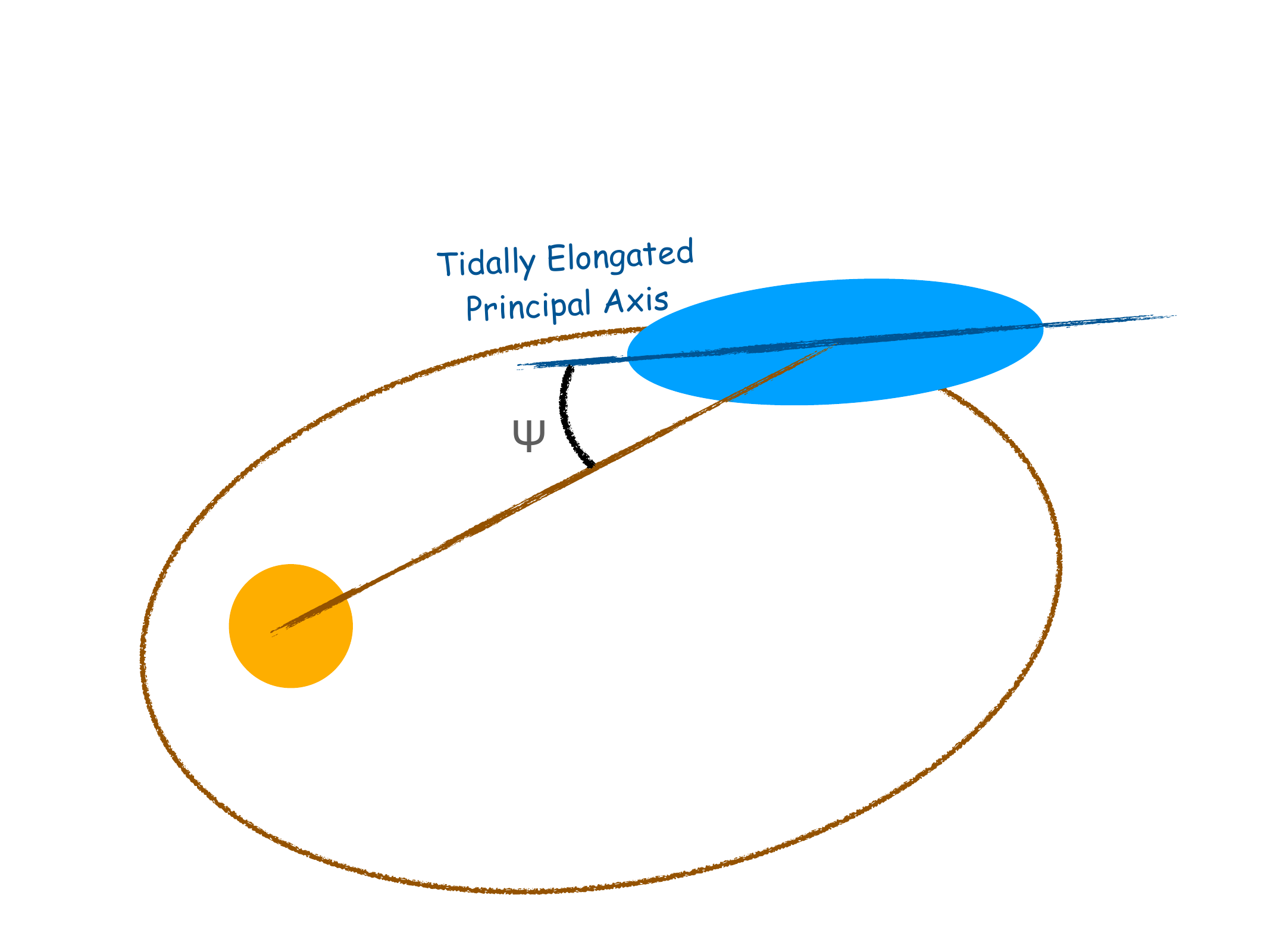}
\centering
\caption{Illustration of the long-axis misalignment. Low variations in $\psi$ correspond to a tidally locked planet.}
\label{fig:psi}
\end{figure}

Specifically, \citet{Vinson19} evolves the longitude of the substellar point separately based on results of orbital evolution of Trappist-I using the \textit{Rebound} simulation package \citep{Tamayo17}. This does not include effects of the variation of the spin-axis on the orbits, and the developed framework neglected the 3-D variations of the planetary spin-axis (i.e., assuming zero planetary obliquities) for simplicity. To evaluate the spin-axis dynamics more accurately, we use our simulation package, which allows backreactions of the spin-axis dynamics on the orbit, as well as the full 3-D dynamics of the planetary spin-axis.

\begin{figure}
\includegraphics[width=0.9\linewidth]{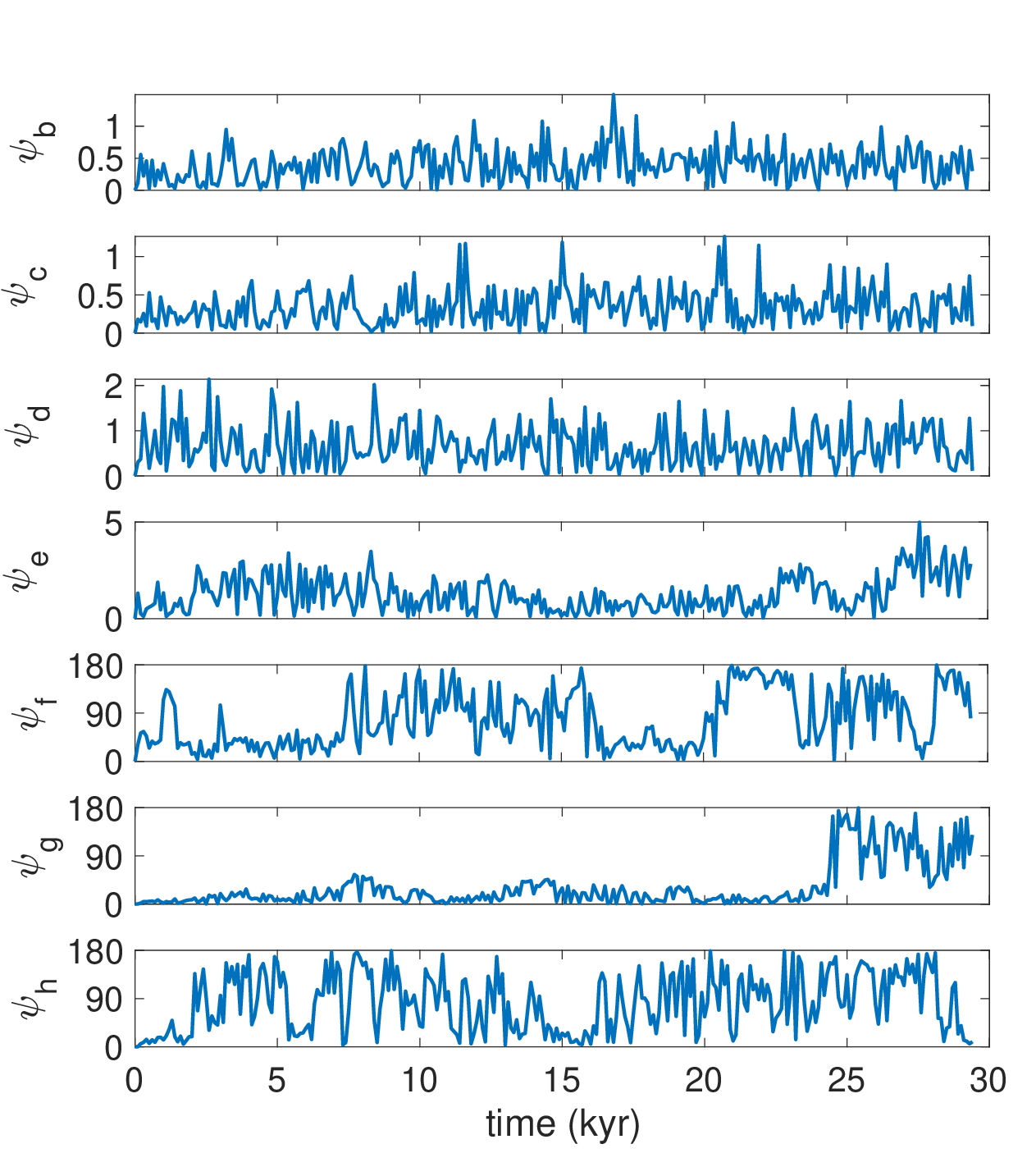}
\centering
\caption{Spin-axis misalignment as a function of time. Planet f, g and h all have large long-axes variations, and are not tidally locked. }
\label{fig:spin_LT}
\end{figure}

We use the same initial condition as those in section \ref{sec:TTV} for the long-term dynamical simulation over $100,000$yrs. We start the planets in synchronized configurations and we calculate the misalignment between the long axes of the planets and their radial direction from the host star, which is $\psi$ illustrated in Figure \ref{fig:psi}.  

Figure \ref{fig:spin_LT} shows this misalignment ($\psi$) 
of the planets. Planet b, c, d and e are closer to the host star, and allow stronger tidal interactions. This leads to low variations in the long-axes of the planets. However, planet f, g and h are further away, where planetary interactions could compete with tidal re-alignment and drive larger spin-axis variations. We note that the obliquities of these planets still remain low (within a few degrees). 
The detailed dependence of the spin-axis variations on the parameters of the planets are beyond the scope of this article, and will be invested in a follow up paper.

\section{Conclusions}

In this article, we developed symplectic integrators and provided a package ``\texttt{GRIT}'' for studying the spin-orbit coupling of N-rigid-body systems. We split the Hamiltonian into four parts with different evolution timescales (tailored splitting), and compose the four parts together in a hierarchical way so that the expensive slow scale evolution is more efficient. In general, the tailored splitting is more flexible and efficient than the traditional splitting. 

To illustrate the validity of the integrator, we showed that it provides results consistent with the secular theories for the obliquity variation of a moonless Earth, and the tidal evolution of a hypothetical Earth-Moon system. This allowed us to confidently apply it to the less well understood system Trappist-I, and show that the differences in transit-timing variations could reach a few seconds for a four year measurements, and planetary interactions could push planet f, g and h out of the synchronized states, which are consistent with \citet{Bolmont20} and \citet{Vinson19}.

We assume the objects are rigid bodies in our simulation package. This is a good approximation when the deformation of the objects are slow. Thus, our simulation package can be applied for objects with a slow change of rotation rate or tidal distortion. When the deformation rate is faster than the orbital variation timescales, spin-orbit coupling using hydrodynamical simulations could provide more accurate results \citep[e.g.,][]{Li21}. Beyond planetary systems, the rigid-body integrator can also be applied to asteroid binaries, which exhibit interesting dynamical properties due to spin-orbit coupling \citep{Fahnestock08, Davis20, Meyer21}.

\section*{Acknowledgement}
The authors thank Sergio Blanes, Matija \'{C}uk, David Michael Hernandez, and Billy Quarles for helpful discussions. We also thank the anonymous review which significantly improved the quality of this article. RC and MT are grateful for the partial support by NSF DMS-1847802. GL is grateful for the partial support by NASA 80NSSC20K0641 and 80NSSC20K0522.

\bibliographystyle{hapj}
\bibliography{references.bib}

\appendix

\section{Approximation of the Potential Energy}\label{appendix:potential}



The procedure to approximate $V\left( \bm q_i, \bm q_j, \bm R_i, \bm R_j \right)$ in \cref{eq:true_potential} by Taylor expansion is shown below:

\begin{align}
\begin{split}
      & V\left( \bm q_i, \bm q_j, \bm R_i, \bm R_j \right) \\
    = & \int_{\Bc_i} \int_{\Bc_j}
    - \frac{\Gc \rho(\bm x_i) \rho(\bm x_j)}
    {\norm{( \bm q_i + \bm R_i \bm x_i ) - ( \bm q_j + \bm R_j \bm x_j )}}
    \, d\bm x_i d\bm x_j \\
    = & \int_{\Bc_i} \int_{\Bc_j}
    - \frac{\Gc \rho(\bm x_i) \rho(\bm x_j)}
    {\sqrt{\norm{\qij}^2 + \norm{\bm R_i \bm x_i - \bm R_j \bm x_j}^2 + 2 (\qij)^T(\bm R_i \bm x_i - \bm R_j \bm x_j)}}
    \, d\bm x_i d\bm x_j \\
    = & \int_{\Bc_i} \int_{\Bc_j}
    - \frac{\Gc \rho(\bm x_i) \rho(\bm x_j)}{\norm{\qij}}
    \bigg( 1 - \frac{1}{2} \left[ \frac{\norm{\bm R_i \bm x_i - \bm R_j \bm x_j}^2 + 2 (\qij)^T(\bm R_i \bm x_i - \bm R_j \bm x_j)}{\norm{\qij}^2} \right] \\
    & + \frac{3}{8} {\left[ \frac{\norm{\bm R_i \bm x_i - \bm R_j \bm x_j}^2 + 2 (\qij)^T(\bm R_i \bm x_i - \bm R_j \bm x_j)}{\norm{\qij}^2} \right]}^2 \bigg)
    \, d\bm x_i d\bm x_j + \Oc(\eta^3) \\
    = & -\frac{\Gc m_i m_j}{\disqij} + \frac{\Gc\left( m_i Tr[\bm J_i^{(d)}] + m_j Tr[\bm J_j^{(d)}] \right)}{2 \disqij^3} - \frac{3 \Gc {\left( \qij \right)}^T \left( m_j \bm R_i \bm J_i^{(d)} \bm R_i^T + m_i \bm R_j \bm J_j^{(d)} \bm R_j^T \right)\left( \qij \right)}{2 \disqij^5}  + \Oc(\eta^3) \\
\end{split}
\label{eq:appendix_V_in_Jd}
\end{align}
where $\eta=\frac{\max(\Rc_i, \Rc_j)}{\disqij}$ ($\Rc_i$ is the largest distance from the center in the $i$th body).
If we use $\bm J$ instead of $\bm J_d$, we have
\begin{align}
\begin{split}
      & V\left( \bm q_i, \bm q_j, \bm R_i, \bm R_j \right) \\
    = & -\frac{\Gc m_i m_j}{\disqij} - \frac{\Gc\left( m_i Tr[\bm J_i] + m_j Tr[\bm J_j] \right)}{2 \disqij^3} + \frac{3 \Gc {\left( \qij \right)}^T \left( m_j \bm R_i \bm J_i \bm R_i^T + m_i \bm R_j \bm J_j \bm R_j^T \right)\left( \qij \right)}{2 \disqij^5}  + \Oc(\eta^3) \\
\end{split}
\label{eq:appendix_V_in_J}
\end{align}

Higher order expansions: 
\begin{align}
    \begin{split} 
        & V\left( \bm q_i, \bm q_j, \bm R_i, \bm R_j \right) \\
        = & \Gc m_i m_j \Bigg\{ -\frac{1}{\disqij}
            +\frac{1}{2\disqij^3} \left[ \frac{1}{5} \left( a_i^2 + b_i^2 + c_i^2 + a_j^2 + b_j^2 + c_j^2 \right) \right] \\
          &  - \frac{3}{2\disqij^5} {\left( \qij \right)}^T  \left(
                \bm R_i  \frac{1}{5} \begin{bmatrix}
                        a_i^2 & 0 & 0 \\
                        0 & b_i^2 & 0 \\
                        0 & 0 & c_i^2 \\
                \end{bmatrix}  \bm R_i^T 
                +
                \bm R_j  \frac{1}{5} \begin{bmatrix}
                        a_j^2 & 0 & 0 \\
                        0 & b_j^2 & 0 \\
                        0 & 0 & c_j^2 \\
                \end{bmatrix}  \bm R_j^T 
            \right) \left( \qij \right) \\
        & -\frac{3}{8\disqij^5} \left( \frac{1}{35} (3 a_i^4 + 3 b_i^4 + 3 c_i^4 + 2 (a_i^2 b_i^2 + a_i^2 c_i^2 + b_i^2 c_i^2))
            + \frac{1}{35} (3 a_j^4 + 3 b_j^4 + 3 c_j^4 + 2 (a_j^2 b_j^2 + a_j^2 c_j^2 + b_j^2 c_j^2)) \right) \\
        & -\frac{3}{4\disqij^5} Tr\left[ \bm R_i^T \bm R_j \frac{1}{5} \begin{bmatrix}
                    a_j^2 & 0 & 0 \\
                    0 & b_j^2 & 0 \\
                    0 & 0 & c_j^2 \\
            \end{bmatrix} \bm R_j^T \bm R_i \frac{1}{5} \begin{bmatrix}
                    a_i^2 & 0 & 0 \\
                    0 & b_i^2 & 0 \\
                    0 & 0 & c_i^2 \\
            \end{bmatrix} \right] \\
            & + \frac{15}{4\disqij^7} \left( {\left( \qij \right)}^T 
                \bm R_i  \frac{1}{35} \begin{bmatrix}
                    a_i^2 \left( 3 a_i^3 + b_i^2 + c_i^2 \right) & 0 & 0 \\
                    0 & b_i^2 \left( a_i^2 + 3 b_i^2 + c_i^2 \right) & 0 \\
                    0 & 0 & c_i^2 \left( a_i^2 + b_i^2  + 3 c_i^2\right) \\
                \end{bmatrix}  \bm R_i^T  \left( \qij \right) \right) \\
            & + \frac{15}{4\disqij^7} \left( {\left( \qij \right)}^T 
                \bm R_j  \frac{1}{35} \begin{bmatrix}
                    a_j^2 \left( 3 a_j^2 + b_j^2 + c_j^2 \right) & 0 & 0 \\
                    0 & b_j^2 \left( a_j^2 + 3 b_j^2 + c_j^2 \right) & 0 \\
                    0 & 0 & c_j^2 \left( a_j^2 + b_j^2  + 3 c_j^2\right) \\
                \end{bmatrix}  \bm R_j^T  \left( \qij \right) \right) \\ 
            & + \frac{15}{4\disqij^7} \left( \frac{1}{5} \left( a_i^2 + b_i^2 + c_i^2 \right)
                {\left( \qij \right)}^T  \bm R_j 
                \frac{1}{5} \begin{bmatrix}
                    a_j^2 & 0 & 0 \\
                    0 & b_j^2 & 0 \\
                    0 & 0 & c_j^2 \\
                \end{bmatrix}  \bm R_j^T  \left( \qij \right) \right) \\
            & + \frac{15}{4\disqij^7} \left( \frac{1}{5} \left( a_j^2 + b_j^2 + c_j^2 \right)
                {\left( \qij \right)}^T  \bm R_i 
                \frac{1}{5} \begin{bmatrix}
                    a_i^2 & 0 & 0 \\
                    0 & b_i^2 & 0 \\
                    0 & 0 & c_i^2 \\
                \end{bmatrix}  \bm R_i^T  \left( \qij \right) \right) \\
            & + \frac{15}{\disqij^7} \left( {\left( \qij \right)}^T
                 \bm R_i 
                \frac{1}{5} \begin{bmatrix}
                    a_i^2 & 0 & 0 \\
                    0 & b_i^2 & 0 \\
                    0 & 0 & c_i^2 \\
                \end{bmatrix}  \bm R_i^T
                 \bm R_j 
                \frac{1}{5} \begin{bmatrix}
                    a_j^2 & 0 & 0 \\
                    0 & b_j^2 & 0 \\
                    0 & 0 & c_j^2 \\
                \end{bmatrix}  \bm R_j^T
             \left( \qij \right) \right) \\
            & - \frac{35}{8\disqij^9} \left(
                Tr\left[ 
                \left( \qij \right) {\left( \qij \right)}^T  \bm R_i 
                    \frac{3}{35} \begin{bmatrix}
                        a_i^4 & 0 & 0 \\
                        0 & b_i^4 & 0 \\
                        0 & 0 & c_i^4 \\
                    \end{bmatrix}
                \bm R_i^T \left( \qij \right)  {\left( \qij \right)}^T
                \right]
                \right) \\
            & - \frac{35}{8\disqij^9} \left(
                Tr\left[
                \left( \qij \right) {\left( \qij \right)}^T  \bm R_j 
                    \frac{3}{35} \begin{bmatrix}
                        a_j^4 & 0 & 0 \\
                        0 & b_j^4 & 0 \\
                        0 & 0 & c_j^4 \\
                    \end{bmatrix}
                \bm R_j^T \left( \qij \right)  {\left( \qij \right)}^T
                \right]
            \right) \\
            & -\frac{105}{4 \norm{\qij}^9} {\left( \qij \right)}^T  \bm R_i 
                    \frac{1}{5} \begin{bmatrix}
                        a_i^2 & 0 & 0 \\
                        0 & b_i^2 & 0 \\
                        0 & 0 & c_i^2 \\
                    \end{bmatrix}  \bm R_i^T  \left( \qij \right)
                 {\left( \qij \right)}^T  \bm R_j 
                    \frac{1}{5} \begin{bmatrix}
                        a_j^2 & 0 & 0 \\
                        0 & b_j^2 & 0 \\
                        0 & 0 & c_j^2 \\
                    \end{bmatrix}  \bm R_j^T  \left( \qij \right) \Bigg\}
              + \mathcal{O} \left( \eta^5 \right). \\
    \end{split}
    \label{eq:potential_4th_order_expansion}
\end{align}

\subsection{Properties of the \textit{hat-map}}\label{appendix:hat-map}

With $\bm{u}, \bm{v}, \bm{w} \in \R^3$, $\mathbf{D} = \begin{bmatrix}
   d_1 & 0 & 0 \\ 
   0 & d_2 & 0 \\ 
   0 & 0 & d_3 \\ 
\end{bmatrix}$, we have
\begin{itemize}
    \item $\hat{\bm{u}} \bm{v} = \bm{u} \times \bm{v}$.
    \item $\widehat{\bm{u} \times \bm{v}} = \hat{\bm{u}} \hat{\bm{v}}-\hat{\bm{v}} \hat{\bm{u}}$.
    \item $ \hat{\bm{u}} \bm{D} - \bm{D} \hat{\bm{u}}^T = Tr \left[ \bm{D} \right] \hat{\bm{u}} - \widehat{\bm{D}\bm{u}}$.
    \item $ \hat{\bm{u}}^T \hat{\bm{u}} \bm{D}  - \bm{D} \hat{\bm{u}}^T \hat{\bm{u}} = \widehat{\bm{u} \times \bm{D} \bm{u}}$
\end{itemize}

\section{Review: equations of motion of one rigid body in a potential} \label{appendix:eom}
We will review two equivalent approaches.
\subsection{Approach 1: Derivation from Constrained Hamiltonian System}
\label{appendix:approach1}
We can view $\bm{R}$ to be in the embedded Euclidean space $\mathbb{R}^{3\times 3} \hookleftarrow \mathsf{SO}(3)$ and use $\bm{R} \in \mathsf{SO}(3)$ as a holonomic constraint. The Lagrangian $L$ (\cref{eq:Lagrangian}) has $9$-DOF 
before applying the constraint $\bm{R} \in \mathsf{SO}(3)$. The conjugate variable of $\bm{R}(t)$ will be denoted by $\bm{P}(t)$.

The constraint of a system forces the evolution of the system in a specific manifold,
and the manifold can be directly calculated from the constraint
(one may refer Chapter VII of \citet{hairer2006geometric} for details).
For a rigid body dynamics represented by a rotation matrix $\bm{R}(t)$,
the constraint is ${\bm{R}(t)}^T \bm{R}(t) - {\bm{I}}_{3 \times 3} = {\bm{0}}_{3 \times 3}$.
\citet{reich1996symplectic}, \citet{hairer2006geometric} have shown the procedure of finding equations of motion
by utilizing the constraint for a rigid body system with a $\bm{R}$ dependent potential.
Using Lagrange multipliers \citep{hairer2006geometric} for the constraint $\bm{R}^T \bm{R}-\bm{I}_{3 \times 3}=\bm{0}$, we have the following Lagrangian,
\begin{align}
L\left( \bm{R}, \dot{\bm{R}} \right) 
= \frac{1}{2} Tr\left[ \dot{\bm{R}} \bm{J}_d \dot{\bm{R}}^T  \right] - V(\bm{R}) -
    \frac{1}{2} Tr\left(\bm{\bm{\Lambda}}^T \left( \bm{R}^T \bm{R} - \bm{I}_{3 \times 3} \right)\right),
    \label{eq:constrained_Lagrangian}
\end{align}
with $6$-dim Lagrange multipliers $\bm{\bm{\bm{\Lambda}}} = \begin{bmatrix}
    \lambda_1 & \lambda_4 & \lambda_6 \\
    \lambda_4 & \lambda_2 & \lambda_5 \\
    \lambda_6 & \lambda_5 & \lambda_3 \\
\end{bmatrix}\in \R^{3 \times 3}$ a symmetric matrix.

Doing Legendre transform for \cref{eq:constrained_Lagrangian}, we have
\begin{align}
    \bm{P} = \frac{\partial L\left( \bm R, \dot{\bm{R}} \right)}{\partial \dot{\bm{R}}} = \dot{\bm{R}} \bm{J}_d,
    \label{Legendre_transform}
\end{align}
and the corresponding Hamiltonian,
\begin{align}
    H(\bm{R},\bm{P}) = \frac{1}{2} Tr\left[ \bm{P} \bm{J}_d^{-1} \bm{P}^T  \right] + V(\bm{R}) +
    \frac{1}{2} Tr\left({\bm{\Lambda}}^T \left( \bm{R}^T \bm{R} - \bm{I}_{3 \times 3} \right)\right).
    \label{eq:EoM_Hamiltonian}
\end{align}

As the constraint for $\bm{R}$ is $\bm{R}^T \bm{R} = \bm{I}_{3 \times 3}$, according to~\cite{hairer2006geometric}, the constraint
for $\bm{P}$ can be obtained by taking time derivative for $\bm{R}^T \bm{R} - \bm{I}_{3 \times 3} = \bm{0}_{3 \times 3}$, i.e.
$\bm{J}_d^{-1} \bm{P}^T \bm{R} + \bm{R}^T \bm{P} \bm{J}_d^{-1} = \bm{0}_{3 \times 3}$.

So,
\begin{align}
\left\{ 
    \begin{aligned}
        \dot{\bm{R}} & =  \frac{\partial H}{\partial \bm{P}} = \bm{P} {\bm{J}_d}^{-1}, \\
        \dot{\bm{P}} & = -\frac{\partial H}{\partial \bm{R}} = -\frac{\partial V\left( \bm{R} \right)}{\partial \bm{R}} - \bm{R} \bm{\Lambda}, \\
    \end{aligned}
\right.
\label{eq:RPdot}
\end{align}
on the manifold
\begin{align}
    \mathcal{M} = \left\{ \left( \bm{R}, \bm{P} \right) | \bm{R}^T \bm{R} = \bm{I}_{3 \times 3},
    \bm{J}_d^{-1} \bm{P}^T \bm{R} + \bm{R}^T \bm{P} \bm{J}_d^{-1} = \bm{0}_{3 \times 3} \right\}.
\end{align}

Note that $\hat{\bm{\Omega}} = \bm{R}^T \bm{P} \bm{J}_d^{-1}$ with $\bm \Omega$ being the body's angular velocity. 
Taking time derivative for $\hat{\bm \Omega}$, we have
\begin{align}
    \begin{split}
        \hat{\dot{\bm \Omega}} = \bm{J}_d^{-1} \bm{P}^T \bm{P} \bm{J}_d^{-1} 
                    + \bm{R}^T \left( -\frac{\partial V\left( \bm{R} \right)}{\partial \bm{R}} - \bm{R} \bm{\Lambda} \right) \bm{J}_d^{-1}.
        \label{eq:Omega_hat_dot_2}
    \end{split}
\end{align}

Physically, we want to find dynamics of $\bm{R}$ and the body's angular momentum $\bm{\Pi}$.
Since $\hat{\bm \Pi} = \widehat{\bm J \bm \Omega}
= Tr \left[ \bm J_d \right] \hat{\bm \Omega} - \widehat{\bm J_d \bm \Omega}
= \hat{\bm \Omega} \bm J_d - \bm J_d \hat{\bm \Omega}^T$ (see appendix \ref{appendix:hat-map}),
we may find dynamics of $\bm \Pi$,
\begin{align}
    \begin{split} 
        \widehat{\dot{\bm\Pi}} = & \left( \bm{J}_d^{-1} \bm{P}^T \bm{P} - \bm{P}^T \bm{P} \bm{J}_d^{-1} \right)
           + \left({\left( \frac{\partial V\left( \bm{R} \right)}{\partial \bm{R}} \right)}^T \bm R
            - \bm R^T \frac{\partial V\left( \bm{R} \right)}{\partial \bm{R}} \right), \\
    \end{split}
    \label{eq:Pihat_dot_1}
\end{align}
with the symmetric $\bm \Lambda$ vanished \footnote{Since $\Lambda$ is symmetric, applying $\hat{\dot{\bm \Omega}} \in \mathfrak{so}(3)$, $\Lambda$ can actually be solved from \cref{eq:Omega_hat_dot_2}.}.

As $\bm P = \bm R \hat{\bm \Omega} \bm J_d$,  properties of hat-map (see appendix \ref{appendix:hat-map}) lead to
\begin{align}
    \begin{split}
        \widehat{\dot{\bm\Pi}}
        = & \left( \hat{\bm\Omega}^T \hat{\bm\Omega} \bm{J}_d - \bm{J}_d  \hat{\bm\Omega}^T \hat{\bm\Omega}\right)
            + \left({\left( \frac{\partial V\left( \bm{R} \right)}{\partial \bm{R}} \right)}^T \bm R
            - \bm R^T \frac{\partial V\left( \bm{R} \right)}{\partial \bm{R}} \right) \\
            & = \widehat{\bm\Omega \times \bm{J}_d \bm{\Omega}}
            + \left({\left( \frac{\partial V\left( \bm{R} \right)}{\partial \bm{R}} \right)}^T \bm R
            - \bm R^T \frac{\partial V\left( \bm{R} \right)}{\partial \bm{R}} \right). \\
        \label{eq:Pihat_dot_2}
    \end{split}
\end{align}
Thus
\begin{align}
    \begin{split}
        \dot{\bm{\Pi}} & = \bm{\Omega} \times \bm{J}_d \bm{\Omega}
        - {\left( \bm{R}^T \frac{\partial V\left( \bm{R} \right)}{\partial \bm{R}} 
        - {\left(\frac{\partial V\left( \bm{R} \right)}{\partial \bm{R}}\right)}^T \bm{R} \right)}^{\vee}\\
        & = \bm{\Omega} \times \left( Tr[\bm{J}_d] - \bm{J} \right) \bm{\Omega}
        - {\left( \bm{R}^T \frac{\partial V\left( \bm{R} \right)}{\partial \bm{R}} 
        - {\left(\frac{\partial V\left( \bm{R} \right)}{\partial \bm{R}}\right)}^T \bm{R} \right)}^{\vee}\\
        & = -\bm{\Omega} \times \bm{J} \bm{\Omega}
        - {\left( \bm{R}^T \frac{\partial V\left( \bm{R} \right)}{\partial \bm{R}} 
        - {\left(\frac{\partial V\left( \bm{R} \right)}{\partial \bm{R}}\right)}^T \bm{R} \right)}^{\vee}\\
        & = \bm{\Pi} \times \bm{J}^{-1} \bm{\Pi} 
        - {\left( \bm{R}^T \frac{\partial V\left( \bm{R} \right)}{\partial \bm{R}} 
        - {\left(\frac{\partial V\left( \bm{R} \right)}{\partial \bm{R}}\right)}^T \bm{R} \right)}^{\vee}\\
    \end{split}
    \label{eq:Pi_dot}
\end{align}

So, equations of motion with respect to $\bm R$ and $\bm \Pi$ for one rigid body system are
\begin{align}
\left\{ 
    \begin{aligned}
        \dot{\bm{R}} & = \bm R \widehat{\bm{J}^{-1} \bm{\Pi}}, \\
        \dot{\bm{\Pi}} & = \bm{\Pi} \times \bm{J}^{-1} \bm{\Pi} 
        - {\left( \bm{R}^T \frac{\partial V\left( \bm{R} \right)}{\partial \bm{R}} 
        - {\left(\frac{\partial V\left( \bm{R} \right)}{\partial \bm{R}}\right)}^T \bm{R} \right)}^{\vee}. \\
    \end{aligned}
\right.
\label{eq:appendix_eom_one_rb}
\end{align}

\subsection{Approach 2: Variational Principle for Mechanics on Lie Group}
\label{appendix:approach2}
How to obtain Euler-Lagrange equation for the Hamilton's variational principle on a Lie group has been well studied (e.g., \cite{marsden1994introduction, HoScSt2009}). Here we summarize the results for the special case of rigid bodies from the expository part of \citet{lee2005lie}.

Denote the infinitesimally varied rotation by
$\bm R_\epsilon = \bm R \exp(\epsilon \bm \hat{\bm \eta})$ with $\epsilon \in \R$ and $\bm \eta \in \R^3$, where $\exp(\cdot)$ is a mapping from $\mathfrak{so}(3)$ to $\mathsf{SO}(3)$.
The varied angular velocity is
\begin{align}
    \begin{split}
        \hat{\bm \Omega}_\epsilon & =\bm R_\epsilon^T \dot{\bm R}_\epsilon
        =e^{-\epsilon \hat{\bm \eta}} \bm R^T \left( \dot{\bm R} e^{\epsilon \hat{\bm \eta}}
        + \bm R \cdot e^{\epsilon \hat{\bm \eta}} \epsilon \hat{\dot{\bm \eta}} \right)  \\
        & =e^{-\epsilon \hat{\bm \eta}} \hat{\bm \Omega} e^{\epsilon \hat{\bm \eta}} + \epsilon \hat{\dot{\bm \eta}}
        = \hat{\bm \Omega} + \epsilon \left\{  \hat{\dot{\bm \eta}} + \hat{\bm \Omega} \hat{\bm \eta} - \hat{\bm \eta} \hat{\bm \Omega} \right\}
        + \mathcal{O}(\epsilon^2). \\
    \end{split}
\end{align}

Consider the action
\begin{align}
    S\left( \bm \Omega, \bm R \right) = \int_{t_0}^{t_1} L\left( \bm \Omega, \bm R \right) \, dt
    = \int_{t_0}^{t_1} \frac{1}{2} Tr \left[ \hat{\bm \Omega} \bm J_d \hat{\bm \Omega}^T \right] - V(\bm R) \, dt.
\end{align}
Taking the variation of the action $S$, we have
\begin{align}
    \begin{split}
        S_\epsilon \left( \bm \Omega, \bm R \right)
         = & S\left( \bm \Omega_\epsilon, \bm R_\epsilon \right) \\
         = & S\left( \bm \Omega, \bm R \right) + \epsilon \int_{t_0}^{t_1}
            \frac{1}{2} Tr \Big[
            - \hat{\dot{\bm \eta}} \left( \bm J_d \hat{\bm \Omega} + \hat{\bm \Omega} \bm J_d \right) \\
            & + \hat{\bm \eta} \hat{\bm \Omega} \left( \bm J_d \hat{\bm \Omega} + \hat{\bm \Omega} \bm J_d  \right)
            - \hat{\bm \eta} \left( \bm J_d \hat{\bm \Omega} + \hat{\bm \Omega} \bm J_d \hat{\bm \Omega} \right) 
        \Big] \\
        & + Tr \left[ \hat{\bm \eta} \bm R^T \frac{\partial \bm R}{\partial \bm R} \right] \, dt + \mathcal{O}(\epsilon^2).
    \end{split}
    \label{eq:varied_action}
\end{align}
Using \textit{Hamilton's Principle}, we have $\frac{d}{d \epsilon} \Big \lvert_{\epsilon=0} S_\epsilon = 0$, i.e.
\begin{align}
    \begin{split}
        \frac{1}{2} \int_{t_0}^{t_1} Tr \left[ \hat{\bm \eta} \left\{ \widehat{\bm J \dot{\bm \Omega}}
        + \widehat{\bm \Omega \times \bm J \Omega} + 2 \bm R^T \frac{\partial V}{\partial \bm R} \right\} \right] = 0
    \end{split}
    \label{eq:Hamilton_Principle_eom}
\end{align}
for any $\bm \eta \in \R^3$. Therefore, $\left\{ \widehat{\bm J \dot{\bm \Omega}}
+ \widehat{\bm \Omega \times \bm J \Omega} + 2 \bm R^T \frac{\partial V}{\partial \bm R} \right\}$ must be skew-symmetric,
which gives us
\begin{align}
    \widehat{\bm J \dot{\bm \Omega}} = -\widehat{\bm \Omega \times \bm J \bm \Omega} +
    \left( {\frac{\partial V}{\partial \bm R}}^T \bm R 
    - \bm R^T \frac{\partial V}{\partial \bm R} \right).
\end{align}
Thus
\begin{align}
    \widehat{\dot{\bm \Pi}} = \widehat{\bm \Pi \times \bm J^{-1} \bm \Pi} +
    \left( {\frac{\partial V}{\partial \bm R}}^T \bm R 
    - \bm R^T \frac{\partial V}{\partial \bm R} \right).
\end{align}

\section{Proof of the Hierarchical Composition Error}\label{appendix:proof_error}

\begin{theorem}
Given four Hamiltonian flows ${\{\varphi_t^{[i]}\}}_{i=1}^4$ of $H_i$ with $H=H_1+H_2+H_3+H_4$. Construct an integrator $\varphi_h:=\Cc_3(\Cc_1(\varphi_h^{[1]}, \varphi_h^{[2]}), \Cc_2(\varphi_h^{[3]}, \varphi_h^{[4]}))$ via composition methods $\Cc_i,\, i=1,2,3$ such that
\begin{align*}
\Cc_i(\varphi_h^A, \varphi_h^B) =
\varphi_{a_1^{[i]} h}^A \circ \varphi_{b_1^{[i]} h}^B \circ
\varphi_{a_2^{[i]} h}^A \circ \varphi_{b_2^{[i]} h}^B \circ
\cdots
\varphi_{a_{n_i}^{[i]} h}^A \circ \varphi_{b_{n_i}^{[i]} h}^B.
\end{align*}
Then $\Ec(\varphi_h)$ equals to the summation of orders of $\Cc_i,\,i=1,2,3$ with $\Ec(\cdot)$ being the global error function.
\end{theorem}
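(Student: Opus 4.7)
My plan is to isolate three independent sources of local truncation error---one from each composition method $\Cc_i$---and then pass from local to global errors by the standard Lady--Windermere's--fan argument. First, I set up the exact intermediate flows: let $H^{\fast}=H_1+H_2$ and $H^{\slow}=H_3+H_4$, and denote by $\Phi_h^{\fast}$, $\Phi_h^{\slow}$, $\Phi_h^H$ the exact time-$h$ flows of $H^{\fast}$, $H^{\slow}$, and $H$ respectively. Let $p_i$ be the order of $\Cc_i$ when applied to \emph{exact} constituent flows, so that by the very definition of composition-method order,
\begin{align*}
\varphi_h^{\fast} := \Cc_1(\varphi_h^{[1]},\varphi_h^{[2]}) &= \Phi_h^{\fast} + \Oc(h^{p_1+1}), \\
\varphi_h^{\slow} := \Cc_2(\varphi_h^{[3]},\varphi_h^{[4]}) &= \Phi_h^{\slow} + \Oc(h^{p_2+1}), \\
\Cc_3(\Phi_h^{\fast},\Phi_h^{\slow}) &= \Phi_h^H + \Oc(h^{p_3+1})
\end{align*}
as one-step local errors. (For perturbative $\Cc_3$ the last line carries explicit $\varepsilon$ factors inside the $\Oc$, but the analysis below is unaffected.)

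Next, I account for the fact that $\Cc_3$ is actually fed the approximate flows $\varphi_h^{\fast},\varphi_h^{\slow}$ rather than exact ones. Splitting
\begin{align*}
\varphi_h \;=\; \Cc_3(\Phi_h^{\fast},\Phi_h^{\slow}) \;+\; \bigl[\Cc_3(\varphi_h^{\fast},\varphi_h^{\slow}) - \Cc_3(\Phi_h^{\fast},\Phi_h^{\slow})\bigr],
\end{align*}
I would control the bracketed difference by a telescoping argument across the $n_3$ substages of $\Cc_3$: each substage is a $C^1$ flow of step $|a_k^{[3]}|h$ or $|b_k^{[3]}|h$ on a compact phase-space region with Jacobian $1+\Oc(h)$, so replacing one inner flow by its approximation at a single substage contributes the corresponding $\Oc(h^{p_1+1})$ or $\Oc(h^{p_2+1})$ error; summing over the $n_3=\Oc(1)$ substages preserves the order and yields the local truncation bound $\varphi_h = \Phi_h^H + \Oc(h^{p_1+1}+h^{p_2+1}+h^{p_3+1})$.

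Finally, the classical Lady--Windermere's--fan stability argument (e.g.\ Theorem II.3.6 of \cite{Hairer06}, applicable since $\Phi_t^H$ is Lipschitz on bounded invariant sets) upgrades a consistent one-step error of order $h^{q+1}$ to a global error of order $h^q$ on any fixed interval $[0,T]$. Applied termwise to the three additive contributions, this gives $\Ec(\varphi_h) = \Oc(h^{p_1}+h^{p_2}+h^{p_3}) = \Ec(\Cc_1)+\Ec(\Cc_2)+\Ec(\Cc_3)$, which is the claimed additivity.

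The main obstacle I expect is the $\varepsilon$-bookkeeping inside $\Ec(\Cc_3)$ when $\Cc_3$ is a perturbative composition method such as $\Cc_{ABA42}$: there $\Ec(\Cc_3)$ is not a pure power of $h$ but a mixed polynomial like $\varepsilon h^{p}+\varepsilon^2 h^{q}$ arising from nested Baker--Campbell--Hausdorff commutators of $H^{\fast}$ with $H^{\slow}$. One must verify that feeding $\Cc_3$ the approximate inner flows does not spoil this refined $\varepsilon$-structure. This reduces to the observation that the inner approximation errors perturb each substage additively and do not introduce new BCH commutators that mix orders, so the $\varepsilon$-grading that produces the sharper $\Cc_3$-order survives modulo the additive $\Oc(h^{p_1+1}+h^{p_2+1})$ correction from the previous paragraph; making this quantitatively tight, by expanding each substage of $\Cc_3$ via BCH and grouping commutators by their $\varepsilon$-order, is where the bulk of the technical bookkeeping resides.
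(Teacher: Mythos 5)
Your proof is correct, but it proceeds along a genuinely different route from the paper's. The paper works entirely at the level of Lie operators: it writes each substage as an exponential $e^{a h \Lc_{H_i}}$, invokes Baker--Campbell--Hausdorff to express $\Cc_1(\varphi_h^{[1]},\varphi_h^{[2]})=e^{A_1}$ with $A_1=\Lc_{H_1}+\Lc_{H_2}+E_1$ (and similarly $A_2=\Lc_{H_3}+\Lc_{H_4}+E_2$), and then observes that $\Cc_3$ applied to $e^{A_1},e^{A_2}$ yields $e^{A_1+A_2+E_3}=e^{\Lc_H+E_1+E_2+E_3}$, so the three error terms simply add in the exponent. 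You instead argue at the level of flow maps: you first bound the one-step defect of $\Cc_3$ fed with exact intermediate flows, then control the substitution of the approximate flows $\varphi_h^{\fast},\varphi_h^{\slow}$ by telescoping across the $\Oc(1)$ substages with a Lipschitz-stability bound, and finally pass from local to global error via Lady Windermere's fan. Each approach buys something: yours makes the local-to-global passage explicit (the paper silently identifies the exponent error with the global error and never states a stability hypothesis), whereas the paper's BCH bookkeeping disposes of precisely the obstacle you flag in your last paragraph --- since $A_2=\Lc_{H_3}+\Lc_{H_4}+E_2=\Oc(\varepsilon)$, every commutator appearing in $E_3$ automatically carries the correct power of $\varepsilon$, and the inner errors $E_1,E_2$ enter only additively without generating new mixed-order commutators, so the refined $(o_0,o_1,\dots)$ grading of the perturbative method $\Cc_3$ survives for free. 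If you want to close your own argument without importing the Lie-algebraic machinery, you would still need to carry out the BCH grouping you sketch at the end; at that point the two proofs essentially converge.
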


\begin{proof}
Assume the associated Lie operators of $\varphi_t^{[i]}$'s vector fields are $\Lc_{H_i}$.
There exists a Lie operator $A_1$ such that for $\Cc_1(\varphi_h^{[1]}, \varphi_h^{[2]})$,
\begin{align*}
    e^{a_1^{[1]} \Lc_{H_1}} e^{b_1^{[1]} \Lc_{H_2}}
    e^{a_2^{[1]} \Lc_{H_1}} e^{b_2^{[1]} \Lc_{H_2}}
    \cdots
    e^{a_{n_1}^{[1]} \Lc_{H_1}} e^{b_{n_1}^{[1]} \Lc_{H_2}}
    = e^{A_1}
    = e^{\Lc_{H_1}+\Lc_{H_2} + E_1}
\end{align*}
with the order of $E_1$ equals the order of $\Cc_1$.
Similarly, we have
\begin{align*}
    e^{a_1^{[2]} \Lc_{H_3}} e^{b_1^{[2]} \Lc_{H_4}}
    e^{a_2^{[2]} \Lc_{H_3}} e^{b_2^{[2]} \Lc_{H_4}}
    \cdots
    e^{a_{n_2}^{[2]} \Lc_{H_3}} e^{b_{n_2}^{[2]} \Lc_{H_4}}
    = e^{A_2}
    = e^{\Lc_{H_3}+\Lc_{H_4} + E_2}
\end{align*}
with the order of $E_2$ equals the order of $\Cc_2$. 
Further for $\Cc_3$,
\begin{align*}
    & e^{a_1^{[3]} A_1} e^{b_1^{[3]} A_2}
    e^{a_2^{[3]} A_1} e^{b_2^{[3]} A_2}
    \cdots
    e^{a_{n_3}^{[3]} A_1} e^{b_{n_3}^{[3]} A_2} \\
    = & e^{A_3}
    = e^{A_1+A_2+E_3} = e^{\Lc_{H_1}+\Lc_{H_2}+\Lc_{H_3}+\Lc_{H_4}+E_1+E_2+E_3},
\end{align*}
with the order of $E_3$ equals the order of $\Cc_3$. 
Therefore, the global error of $\varphi_h$ is the summation of the orders of $\Cc_i,\,i=1,2,3$.
\end{proof}

\section{Composition Methods}\label{appendix:composition}
Symplectic integrators of a Hamiltonian system $H=A+B$ can be constructed by composing the flows of $A$ and $B$. We list the composition methods used in the paper below for the general $H=A+B$ and perturbative Hamiltonian $H=A+\varepsilon B$ in \cref{tb:composition1} and \cref{tb:composition2} respectively.


\begin{table}
\begin{center}
\renewcommand{\arraystretch}{1.5}
\setlength{\tabcolsep}{10pt}
\begin{tabular}{| p{15cm} | c |}
\hline
 composition method & order \\ 
 \hline
 \hline
 $\Cc_\Euler \left(\varphi_h^A, \varphi_h^B \right):=\varphi_h^A \circ \varphi_h^B$ & (1) \\
 \hline
 $\Cc_\Verlet \left(\varphi_h^A, \varphi_h^B \right):=\varphi_{h/2}^A \circ \varphi_h^B \circ \varphi_{h/2}^A$ & (2) \\ 
 \hline
 $\Cc_\TriJump \left(\varphi_h^A, \varphi_h^B \right):=
 \varphi_{\gamma_1 h} \circ \varphi_{\gamma_2 h}
 \circ \varphi_{\gamma_1 h} \newline$
with $\varphi_h:=\Cc_\Verlet(\varphi_h^A, \varphi_h^B)$ and $\gamma_1 = \nicefrac{1}{(2-2^{1/3})}$, $\gamma_2=1-2\gamma_1$\citep{suzuki1990fractal}.
 & (4) \\
 \hline
 $\Cc_{S6} \left(\varphi_h^A, \varphi_h^B \right):=
 \varphi_{a_1 h} \circ \varphi_{a_2 h} \circ
 \varphi_{a_3 h} \circ \varphi_{a_4 h} \circ
 \varphi_{a_3 h} \circ \varphi_{a_2 h} \circ
 \varphi_{a_1 h}$
 \newline with $\varphi_h:=\Cc_\Verlet(\varphi_h^A, \varphi_h^B)$
 and $a_1=0.784513610477560$, $a_2=0.235573213359357$, $a_3=-1.17767998417887$, $a_4=1-2(a_1+a_2+a_3)$\citep{yoshida1990construction} & (6) \\
\hline
\end{tabular}
\end{center}
\caption{Composition methods $\Cc(\cdot, \cdot)$ of general $H=A+B$. $\varphi_h^A$ and $\varphi_h^B$ are flows of $A$ and $B$ respectively.}
\label{tb:composition1}
\end{table}

\begin{table}
\begin{center}
\renewcommand{\arraystretch}{1.5}
\setlength{\tabcolsep}{10pt}
\begin{tabular}{| p{15cm} | c |}
\hline
 composition method & order \\ 
 \hline
 \hline
    $\Cc_{BAB22} (\varphi_h^A, \varphi_h^B) =
    \varphi_{h/2}^B
    \circ \varphi_{h}^A
    \circ \varphi_{h/2}^B$ & $(\cdot,2,2)$ \\
 \hline
    $\Cc_{ABA22} (\varphi_h^A, \varphi_h^B) =
    \varphi_{h/2}^A
    \circ \varphi_{h}^B
    \circ \varphi_{h/2}^A$ & $(\cdot,2,2)$ \\
 \hline
    $\Cc_{ABA42} (\varphi_h^A, \varphi_h^B) =
    \varphi_{\frac{(3-\sqrt{3})h}{6}}^A
    \circ \varphi_{\frac{h}{2}}^B
    \circ \varphi_{\frac{h}{\sqrt{3}}}^A
    \circ \varphi_{\frac{h}{2}}^B
    \circ \varphi_{\frac{(3-\sqrt{3})h}{6}}^A.$
    (${\Sc \Ac \Bc \Ac}_2$ in \citep{laskar2001high} or equivalently the order $(4,2)$ $ABA$ method with $s=2$ in \citep{mclachlan1995composition})
    & $(\cdot,4,2)$ \\
 \hline
\end{tabular}
\end{center}
\caption{Composition methods $\Cc(\cdot, \cdot)$ of perturbative $H=A+\varepsilon B$. $\varphi_h^A$ and $\varphi_h^B$ are flows of $A$ and $\varepsilon B$ respectively.}
\label{tb:composition2}
\end{table}

\end{document}